\newcommand{\Sep}{\textit{Sep}}
\newcommand{\Res}{\textit{Res}}
\newcommand{\Em}{\textit{Em}}
\newcommand{\Pa}{\textit{Pa}}
\newcommand{\Ch}{\textit{Ch}}
\newcommand{\Ex}{\textit{Explore}}
\newcommand{\Size}{\textit{Size}}
\newcommand{\MEC}{\textit{MEC}}
\newtheorem{lemma}{Lemma}
\newtheorem{claim}{Claim}
\newtheorem{proposition}{Proposition}
\newtheorem{theorem}{Theorem}
\newtheorem{corollary}{Corollary}
\newtheorem{definition}{Definition}
\newtheorem{remark}{Remark}
\newtheorem{example}{Example}
\renewcommand\@biblabel[1]{\textbf{#1.}} 
\renewcommand{\@listI}{\itemsep=0pt} 
\renewcommand{\maketitle}{ 
\begin{center}
{\LARGE\@title} 
\end{center}

\vspace{50pt} 

{\large\@author} 

\vspace{40pt} 
}
\title{\textbf{Counting and Sampling from Markov Equivalent DAGs}\\\vspace{5mm}
 \textbf{Using Clique Trees }}
\author{\noindent
\textsc{AmirEmad Ghassami}\\
{\textit{Department of ECE, University of Illinois at Urbana-Champaignn, Urbana, IL, USA}}\\\vspace{3mm}
{\texttt{ghassam2@illinois.edu}}\\
\textsc{Saber Salehkaleybar}\\
{\textit{Electrical Engineering Department, Sharif University of Technology, Tehran, Iran}}\\\vspace{3mm}
{\texttt{saleh@sharif.edu}}\\
\textsc{Negar Kiyavash}\\
{\textit{School of ISyE and ECE, Georgia Institute of Technology, Atlanta, GA, USA}}\\\vspace{3mm}
{\texttt{nkiyavash3@gatech.edu}}\\
\textsc{Kun Zhang}\\
{\textit{Department of Philosophy, Carnegie Mellon University, Pittsburgh, USA}}\\\vspace{3mm}
{\texttt{kunz1@cmu.edu}}\\
}
\begin{document}

\maketitle 



\begin{abstract}
A directed acyclic graph (DAG) is the most common graphical model for representing causal relationships among a set of variables. When restricted to using only observational data, the structure of the ground truth DAG is identifiable only up to Markov equivalence, based on conditional independence relations among the variables. Therefore, the number of DAGs equivalent to the ground truth DAG is an indicator of the causal complexity of the underlying structure--roughly speaking, it shows how many interventions or how much additional information is further needed to recover the underlying DAG. In this paper, we propose a new technique for counting the number of DAGs in a Markov equivalence class. Our approach is based on the clique tree representation of chordal graphs. We show that in the case of bounded degree graphs, the proposed algorithm is polynomial time. We further demonstrate that this technique can be utilized for uniform sampling from a Markov equivalence class, which provides a stochastic way to enumerate DAGs in the equivalence class and may be needed for finding the best DAG or for causal inference given the equivalence class as input. We also extend our counting and sampling method to the case where prior knowledge about the underlying DAG is available, and present applications of this extension in causal experiment design and estimating the causal effect of joint interventions.
\end{abstract}


\vspace{30pt} 


\section{Introduction}
\label{sec:intro}

A directed acyclic graph (DAG) is a commonly used graphical model to represent causal relationships among a set of variables \cite{pearl2009causality}. In a DAG representation, a directed edge $X\rightarrow Y$ indicates that variable $X$ is a direct cause of variable $Y$ relative to the considered variable set. Such a representation has numerous applications in fields ranging from biology \cite{sachs2005causal} and genetics \cite{zhang2013integrated} to machine learning \cite{peters2017elements,koller2009probabilistic}.

The general approach to learning a causal structure is to use statistical data from variables to find a DAG, which is maximally consistent with the conditional independencies estimated from data. This is due to the fact that under Markov property and faithfulness assumption,  d-separation of variables in a DAG is equivalent to conditional independencies of the variables in the underlying joint probability distribution \cite{spirtes2000causation,pearl2009causality}. However, a DAG representation of a set of conditional independencies in most cases is not unique. This restricts the learning of the  structure to the Markov equivalences of the underlying DAG. The set of Markov equivalent DAGs is referred to as a Markov equivalence class (MEC). A MEC is commonly represented by a mixed graph called essential graph, which can contain both directed and undirected edges \cite{spirtes2000causation}.

In general, there is no preference amongst the elements of MEC, as they all represent the same set of conditional independencies. Therefore, for a given dataset (or a  joint probability distribution), the size of the MEC, i.e., the number of its elements, can be seen as a metric for measuring the causal complexity of the underlying structure--this complexity indicates how many interventional experiments or how much additional information (e.g., knowledge about the causal mechanisms) is needed to further fully learn the DAG structure, which was only partially recovered from mere observation.
Another important application of the size of the MEC is the following: As mentioned earlier, one can compare different DAGs to find the most consistent one with a given joint distribution over the variables. Here, the space of DAGs, or the space of MECs can be chosen as the search space. Chickering showed that when the size of the MECs are large, searching over the space of MECs is more efficient \cite{chickering2002optimal}. Hence, knowing the size of the MEC can help us decide the search space.

For the general problem of enumerating MECs, Steinsky proposed recursive enumeration formulas for the number of labelled essential graphs, in which the enumeration parameters are the number of vertices, chain components, and cliques \cite{steinsky2013enumeration}. This approach is not focused on a certain given MEC. For the  problem of finding the size of a given MEC, an existing solution is to use Markov chain methods \cite{he2013reversible,bernstein2017sampling}. According to this method, a Markov chain is constructed over the elements of the MEC whose properties ensure that the stationary distribution is uniform over all the elements. The rate of convergence and computational issues hinders practical application of Markov chain methods.
Recently, an exact solution for finding the size of a given MEC was proposed \cite{he2015counting}, in which the main idea was noting that sub-classes of the MEC with a fixed unique root variable partition the class. The authors show that there are five types of MECs whose sizes can be calculated with five formulas, and for any other MEC, it can be partitioned recursively into smaller sub-classes until the sizes of all subclasses can be calculated from the five formulas. An accelerated version of the method in \cite{he2015counting} is proposed in \cite{he2016formulas}, which is based on the concept of core graphs.

In this paper, we propose a new counting approach, in which the counting is performed on the clique tree representation of the graph. Compared to \cite{he2015counting}, our method provides us with a more systematic way for finding the orientation of the edges in a rooted sub-class, and enables us to use the memory in an efficient way in the implementation of the algorithm. Also, using clique tree representation enables us to divide the graph into smaller pieces, and perform counting on each piece separately (Theorem \ref{thm:edgecut}). We will show that for bounded degree graphs, the proposed solution is capable of computing the size of the MEC in polynomial time. The counting technique can be utilized for two main goals: {\bf (a) Uniform sampling}, and {\bf (b) Applying prior knowledge}.
As will be explained, for these goals, it is essential in our approach to have the ability of explicitly controlling the performed orientations in the given essential graph. Therefore, neither the aforementioned five formulas presented in (He, Jia, and Yu 2015), nor the accelerated technique in (He and Yu 2016) are suitable for these purposes.
 
{\bf (a) Uniform sampling:} In Section \ref{sec:samp}, we show that our counting technique can be used to uniformly sample from a given MEC. This can be utilized in many scenarios; followings are two examples.
1. \emph{Evaluating effect of an action in a causal system.} For instance, in a network of users, we may be interested in finding out  conveying a news to which user leads to the maximum spread of the news. This question could be answered if the causal structure is known, but we often do not have the exact causal DAG. Instead, we can resort to uniformly sampling from the corresponding MEC and evaluate the effect of the action on the samples.
2.  Given a DAG $D$ from a MEC, there are simple algorithms for generating the essential graph corresponding to the MEC \cite{meek1995causal,andersson1997characterization,chickering2002optimal}. However, for MECs with large size, it is not computationally feasible to form every DAG represented by the essential graph. In \cite{hoyer2009bayesian} the authors require to evaluate the score of all DAGs in MEC to find the one with the highest score. Evaluating scores on uniform samples provides an estimate for the maximum score.

{\bf (b) Applying prior knowledge:} In Section \ref{sec:prior}, we will further extend our counting and sampling techniques to the case that some prior knowledge regarding the orientation of a subset of the edges in the structure is available. Such prior knowledge may come from imposing a partial ordering of variables before conducting causal structure learning from both observational and interventional data \cite{scheines1998tetrad,hauser2012characterization,wang2017permutation}, or from certain model restrictions \cite{hoyer2012causal,rothenhausler2018causal,eigenmann2017structure}.
We will show that rooted sub-classes of the MEC allow us to understand whether a set of edges with unknown directions can be oriented to agree with the prior, and if so, how many DAGs in the MEC are consistent with such an orientation. 
We present the following two applications:
1. The authors in \cite{nandy2017estimating} proposed a method for estimating the causal effect of joint interventions from observational data. Their method requires extracting possible valid parent sets of the intervention nodes from the essential graph, along with their multiplicity. They proposed the joint-IDA method for this goal, which is exponential in the size of the chain component of the essential graph, and hence, becomes infeasible for large components. In Section \ref{sec:prior}, we show that counting with prior knowledge can solve this issue of extracting possible parent sets. 
2. In Section \ref{sec:int}, we provide another specific scenario in which our proposed methodology is useful. This application is concerned with finding the best set of variables to intervene on when we are restricted to a certain budget for the number of interventions \cite{ghassami2018budgeted}.

\section{Definitions and Problem Description}
\label{sec:desc}

For the definitions in this section, we mainly follow Andersson \cite{andersson1997characterization}. A graph $G$ is a pair $(V(G),E(G))$, where $V(G)$ is a finite set of vertices and $E(G)$, the set of edges, is a subset of $(V\times V)\setminus\{(a,a):a\in V\}$. An edge $(a,b)\in E$ whose opposite $(b,a)\in E$ is called an undirected edge and we write $a-b\in G$. An edge $(a,b)\in E$ whose opposite $(b,a)\not\in E$ is called a directed edge, and we write $a\rightarrow b\in E$. A graph is called a chain graph if it contains no \emph{partially} directed cycles. After removing all directed edges of a chain graph, the components of the remaining undirected graph are called the chain components of the chain graph. A v-structure of $G$ is a triple $(a,b,c)$, with induced subgraph $a\rightarrow b\leftarrow c$. Under Markov condition and faithfulness assumptions, a directed acyclic graph (DAG) represents the conditional independences of a distribution on variables corresponding to its vertices. Two DAGs are called Markov equivalent if they represent the same set of conditional independence relations. The following result due to \cite{verma1990equivalence} provides a graphical test for Markov equivalence:
\begin{lemma}\cite{verma1990equivalence}
\label{lem:verma}
Two DAGs are Markov equivalent if and only if they have the same skeleton and v-structures.
\end{lemma}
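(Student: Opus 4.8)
The plan is to prove the two implications by translating the semantic notion of Markov equivalence --- equality of the sets of d-separation statements entailed by the two DAGs --- into the two stated structural invariants. For the forward implication I would first record a conditional-independence (CI) characterization of adjacency: in any DAG $G$, vertices $a$ and $b$ are adjacent if and only if no set $S\subseteq V\setminus\{a,b\}$ d-separates them. The nontrivial direction is that non-adjacent vertices are always separable, which follows from the local Markov property: if $a,b$ are non-adjacent then one of them, say $b$, is a non-descendant of the other, and $a\perp_d b\mid \Pa(a)$ with $b\notin\Pa(a)$. Since adjacency is thereby a function of the d-separation relation alone, two Markov equivalent DAGs must share the same skeleton.

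Given that the skeletons agree, I would next characterize v-structures by CI. Fix an unshielded triple, i.e. $a-b-c$ with $a,c$ non-adjacent (hence separable). Then $(a,b,c)$ is a v-structure $a\to b\leftarrow c$ if and only if $b$ belongs to no set that d-separates $a$ and $c$: if $b$ is a collider, conditioning on $b$ activates the two-edge path $a-b-c$, so every separating set excludes $b$; whereas if $b$ is a non-collider, that path is blocked only by conditioning on $b$, so every separating set contains $b$. As $a,c$ admit at least one separating set, these alternatives are mutually exclusive and exhaustive, so the v-structures too are determined by the d-separation relation. This completes the forward implication.

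For the converse, assume $G_1$ and $G_2$ share both skeleton and v-structures; it suffices to show they entail the same d-separations, and by symmetry it is enough to turn an active (d-connecting) path in $G_1$ into one in $G_2$. I would take a shortest active path $\pi=\langle x=v_0,\dots,v_k=y\rangle$ relative to the conditioning set $Z$ in $G_1$ and argue, by a minimality/shortcut argument, that every consecutive triple $(v_{i-1},v_i,v_{i+1})$ is unshielded; the key point making the shortcut active-preserving is that a collider of an active path is an ancestor of $Z$ and that a parent of such a collider is again an ancestor of $Z$. Each collider of $\pi$ is then an unshielded collider, i.e. a v-structure, so by hypothesis it occurs identically in $G_2$; and each non-collider triple, being unshielded and not a v-structure in $G_1$, is not a v-structure in $G_2$ either, so that vertex stays a non-collider. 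Hence $\pi$ is a path of $G_2$ with exactly the same collider/non-collider pattern.

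The principal obstacle, and the step I expect to be most delicate, is the descendant clause of the collider condition: along $\pi$ each collider is an ancestor of $Z$ in $G_1$, and this must be recovered in $G_2$, even though the ancestor relation --- unlike the skeleton and v-structures --- is not preserved across a Markov equivalence class. I would handle it by folding the requirement into the path itself: at every active collider $v_i$ of $\pi$ attach a shortest directed path from $v_i$ to $Z$ in $G_1$, so that the activeness of the whole composite object becomes purely a statement about the collider pattern of its unshielded triples, which is shared by $G_2$. Transferring this augmented walk to $G_2$ and then extracting an honest active path reproduces the d-connection there. The genuinely fiddly bookkeeping is ensuring that the appended directed tails meet $\pi$ (and one another) only in unshielded triples and do not create collider configurations that the hypothesis cannot control; managing these junctions is where the real work of the converse lies.
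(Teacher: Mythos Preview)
The paper does not prove Lemma~\ref{lem:verma}; it merely states the result and attributes it to Verma and Pearl \cite{verma1990equivalence}. There is therefore no ``paper's own proof'' to compare against --- this is background material imported from the literature, and your proposal supplies an argument the authors chose to omit.

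On the substance of your sketch: the forward direction is standard and correct as written. Your CI characterization of adjacency (non-adjacent vertices are separated by the parents of whichever one is not a descendant of the other) and of v-structures at unshielded triples (the middle vertex lies in \emph{no} separating set if and only if it is a collider) are exactly the right invariants, and they immediately give the implication.

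For the converse you have chosen one of the known routes --- reducing to a shortest active path and transferring its collider pattern --- and you have correctly identified the two places where care is needed: establishing that every triple on a shortest active path is unshielded (so that the shared-v-structure hypothesis applies), and re-establishing the ancestor-of-$Z$ condition for colliders in $G_2$. Your plan to absorb the latter by appending directed tails to $Z$ and then arguing about the augmented walk is the standard fix; the bookkeeping at the junctions is indeed the crux, but it goes through. An alternative route, which avoids the augmented-walk surgery, is to show that two DAGs with the same skeleton and v-structures have the same moral graph on every ancestral set, and then invoke the moral-graph criterion for d-separation; you may find that version cleaner to write out in full.
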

 We denote the Markov equivalence class (MEC) containing DAG $D$ by $[D]$. A MEC can be represented by a graph $G^*$, called essential graph, which is defined as $G^*=\cup(D:D\in [D])$. We denote the MEC corresponding to essential graph $G^*$ by $\MEC(G^*)$. Essential graphs are also referred to as completed partially directed acyclic graphs (CPDAGs) \cite{chickering2002optimal}, and maximally oriented graphs \cite{meek1995causal} in the literature. \cite{andersson1997characterization} proposed a graphical criterion for characterizing an essential graph. They showed that an essential graph is a chain graph, in which every chain component is chordal. As a corollary of Lemma \ref{lem:verma}, no DAG in a MEC can contain a v-structure in the subgraphs corresponding to a chain component.

We refer to the number of elements of a MEC by the size of the MEC, and we denote the size of $\MEC(G^*)$ by $\textit{Size}(G^*)$. Let $\{G_1,...,G_c\}$ denote the chain components of $G^*$. $\textit{Size}(G^*)$ can be calculated from the size of chain components using the following equation \cite{gillispie2002size,he2008active}:
\begin{equation}
\label{eq:prod}
\textit{Size}(G^*)=\prod_{i=1}^c\textit{Size}(G_i).
\end{equation}
This result suggests that the counting can be done separately in every chain component.
Therefore, without loss of generality, we can focus on the problem of finding the size of a MEC for which the essential graph is a chain component, i.e., an undirected connected chordal graph (UCCG), in which none of the members of the MEC are allowed to contain any v-structures.

In order to solve this problem, the authors of \cite{he2015counting} showed that there are five types of MECs whose sizes can be calculated with five formulas, and that for any other MEC, it can be partitioned recursively into smaller subclasses until the sizes of all subclasses can be calculated from the five formulas. 
We explain the partitioning method as it is relevant to this work as well.

Let $G$ be a UCCG and $D$ be a DAG in $\MEC(G)$. A vertex $v\in V(G)$ is the root in $D$ if its in-degree is zero.
\begin{definition}
	Let UCCG $G$ be the representer of a MEC. The $v$-rooted sub-class of the MEC is the set of all $v$-rooted DAGs in the MEC. This sub-class can be represented by the $v$-rooted essential graph $G^{(v)}=\cup(D:D\in \text{$v$-rooted sub-class})$.
\end{definition}
For instance, for UCCG $G$ in Figure \ref{fig:ex1}$(a)$, $G^{(v_1)}$ and $G^{(v_2)}$ are depicted in Figures \ref{fig:ex1}$(c)$ and \ref{fig:ex1}$(g)$, respectively.
\begin{lemma}
\label{lem:partition}
	\cite{he2015counting} Let $G$ be a UCCG. For any $v\in V(G)$, the $v$-rooted sub-class is not empty and the set of all $v$-rooted sub-classes partitions the set of all DAGs in the MEC.
\end{lemma}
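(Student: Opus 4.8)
The plan is to use the characterization, established just above, that for a UCCG $G$ the DAGs in the MEC are precisely the acyclic orientations of $G$ that contain no v-structures (this follows from Lemma~\ref{lem:verma}, since all such DAGs share the skeleton $G$ and we forbid v-structures). Under this characterization the root of a DAG $D\in\MEC(G)$ is exactly a source of the orientation, i.e.\ a vertex of in-degree zero. The statement then splits into two claims: (i) every $D\in\MEC(G)$ has a \emph{unique} root, which yields a well-defined map $D\mapsto\mathrm{root}(D)$ and hence the partition into $v$-rooted sub-classes; and (ii) for every $v\in V(G)$ the value $v$ is attained, i.e.\ the $v$-rooted sub-class is non-empty. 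I would prove (i) first, as it is the conceptual core, and then (ii).

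For the uniqueness in (i), every finite DAG has at least one source, so I only need to rule out two distinct sources $a,b$. Since $G$ is connected I would take a \emph{shortest} path $a=w_0-w_1-\cdots-w_k=b$ in $G$; being shortest it is an induced path, so $w_{i-1}$ and $w_{i+1}$ are non-adjacent for every internal index $i$. Because $a$ is a source the first edge is oriented $w_0\to w_1$, and because $b$ is a source the last edge is oriented $w_k\to w_{k-1}$; reading the orientations of the consecutive path edges as ``forward'' or ``backward,'' the sequence starts forward and ends backward, so at some internal vertex it flips, giving $w_{i-1}\to w_i\leftarrow w_{i+1}$. As $w_{i-1},w_{i+1}$ are non-adjacent this is a v-structure, contradicting $D\in\MEC(G)$. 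Hence the root is unique, and so the $v$-rooted sub-classes are pairwise disjoint and cover the MEC.

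For the non-emptiness in (ii) I would invoke chordality, which has not yet been used. Fix $v$ and run a perfect-elimination-type search (Lex-BFS or maximum cardinality search) started at $v$; for a chordal graph this returns a visiting order $v=u_1,u_2,\dots,u_n$ whose reversal is a perfect elimination ordering, so for each $i$ the neighbors of $u_i$ among $\{u_1,\dots,u_{i-1}\}$ form a clique. Orienting every edge from the earlier to the later vertex in this order makes $u_1=v$ a source and the orientation acyclic, while the parent set of each $u_i$ is precisely that clique of earlier neighbors; thus no parent pair is non-adjacent and the orientation has no v-structures. This exhibits a DAG in the $v$-rooted sub-class, proving it non-empty.

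I expect the main obstacle to be the non-emptiness direction: producing, for an \emph{arbitrary} prescribed root $v$, an orientation that is simultaneously acyclic and v-structure-free. This is exactly where chordality is indispensable, and the care needed is in matching the perfect-elimination property (``later-eliminated neighbors form a clique'') to the condition one actually wants (``the parents of each vertex form a clique'') under the chosen edge-orientation convention. The uniqueness direction is comparatively routine once the shortest-path/induced-path observation is in hand; its only subtle point is the argument that the forward/backward orientation pattern along the path must reverse at some internal vertex.
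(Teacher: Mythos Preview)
The paper does not supply its own proof of this lemma; it simply cites the result from \cite{he2015counting} and uses it. There is therefore no in-paper argument to compare your proposal against.

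That said, your argument is correct and self-contained. The uniqueness part~(i) is the standard shortest-path collision argument and uses only connectedness of $G$, not chordality. The non-emptiness part~(ii) is the right place to spend chordality: starting MCS (or Lex-BFS) at $v$ indeed yields a visiting order whose reversal is a perfect elimination ordering regardless of the start vertex, so the parents of each $u_i$ under the ``earlier $\to$ later'' orientation form a clique, giving an acyclic, v-structure-free orientation with $v$ as the unique source. One small point worth making explicit in a polished write-up is the converse direction of your opening characterization---that \emph{every} acyclic, v-structure-free orientation of $G$ lies in $\MEC(G)$---which follows immediately from Lemma~\ref{lem:verma} once you fix any one such orientation as a reference DAG; the paper states this just before Lemma~\ref{lem:partition} as a corollary of Lemma~\ref{lem:verma}.
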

From Lemma \ref{lem:partition} we have
\begin{equation}
\label{eq:sum}
\textit{Size}(G)=\sum_{v\in V(G)}\textit{Size}(G^{(v)}).
\end{equation}
Hence, using equations \eqref{eq:prod} and \eqref{eq:sum}, we have
\begin{equation}
\label{eq:size}
\textit{Size}(G^*)=\prod_{i=1}^c\sum_{v\in V(G_i)}\textit{Size}(G_i^{(v)}).
\end{equation}
He et al. showed that $G^{(v)}$ is a chain graph with chordal chain components, and introduced an algorithm for generating this essential graph. Therefore, using equation \eqref{eq:size}, $\textit{Size}(G^*)$ can be obtained recursively.
The authors of \cite{he2015counting} did not characterize the complexity, but reported that their experiments suggested when the number of vertices is small or the graph is sparse, the proposed approach is efficient.

\section{Calculating the size of a MEC}
\label{sec:count}

In this section, we present our proposed method for calculating the size of the MEC corresponding to a given UCCG. We first introduce some machinery required in our approach for representing chordal graphs via clique trees. The definitions and propositions are mostly sourced from \cite{blair1993introduction,vandenberghe2015chordal}.

For a given UCCG, $G$, let $\mathcal{K}_G=\{K_1,\cdots,K_m\}$ denote the set containing the maximal cliques of $G$, and let $T=(\mathcal{K}_G,E(T))$ be a tree on $\mathcal{K}_G$, referred to as a \emph{clique tree}.

\begin{definition}[\emph{Clique-intersection property}]
	A clique tree $T$ satisfies the clique-intersection property if for every pair of distinct cliques $K, K'\in\mathcal{K}_G$, the set $K\cap K'$ is contained in every clique on the path connecting $K$ and $K'$ in the tree.
\end{definition}

\begin{proposition}
\label{prop:blair1}
	\cite{blair1993introduction} A connected graph $G$ is chordal if and only if there exists a clique tree 
	for $G$, which satisfies the clique-intersection property.
\end{proposition}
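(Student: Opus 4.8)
The plan is to prove the two directions separately, built on the observation that the clique-intersection property is equivalent to an \emph{induced-subtree property}: for every vertex $v\in V(G)$, the set $T_v=\{K\in\mathcal{K}_G:v\in K\}$ of maximal cliques containing $v$ induces a connected subtree of $T$. This equivalence is immediate from the definition --- if $v\in K$ and $v\in K'$ then $v\in K\cap K'$, so by the clique-intersection property $v$ lies in every clique on the $K$--$K'$ path, which is exactly connectivity of $T_v$; the converse reads the definition back vertex by vertex. I would record this reformulation first, since it drives both implications.

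For the direction ``clique tree with the clique-intersection property $\Rightarrow G$ chordal,'' I would use the induced-subtree property to view $G$ as the intersection graph of the subtrees $T_v$: two vertices $u,v$ are adjacent iff some maximal clique contains both, i.e.\ iff $T_u\cap T_v\neq\emptyset$. I then show such a graph has no chordless cycle of length $\ge 4$. Suppose $v_0v_1\cdots v_{k-1}v_0$ is one; for each edge $v_iv_{i+1}$ pick a clique $K_i$ containing both endpoints. Since $v_i\in K_{i-1}\cap K_i$, the subtree $T_{v_i}$ contains the entire tree-path from $K_{i-1}$ to $K_i$. Concatenating these paths around the cycle produces a closed walk in $T$; because $T$ is acyclic the walk must backtrack over some edge of $T$, and tracking which subtrees cover that edge forces two non-consecutive cycle vertices to lie in a common clique, hence to be adjacent --- contradicting chordlessness. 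The careful bookkeeping in this closed-walk argument is where I expect the main difficulty of this direction to lie.

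For the converse, ``$G$ chordal $\Rightarrow$ a clique tree with the clique-intersection property exists,'' I would induct on $|\mathcal{K}_G|$, using the standard fact that every chordal graph has a simplicial vertex $v$. A short observation is that for a simplicial $v$ the set $N[v]$ is itself a maximal clique and is the unique maximal clique containing $v$: any maximal clique containing $v$ consists of neighbors of $v$, hence lies inside $N[v]$. Deleting $v$ yields a smaller chordal graph $G'$ whose maximal cliques are those of $G$ other than $N[v]$, together with $N(v)$ in case $N(v)$ becomes maximal in $G'$. By the induction hypothesis $G'$ has a clique tree $T'$ satisfying the property, and I reattach $v$ in one of two ways: if $N(v)$ is a maximal clique of $G'$, replace that node of $T'$ by $N[v]$; otherwise pick a maximal clique $K'\supseteq N(v)$ of $G'$ and hang a new leaf $N[v]$ off $K'$.

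Verifying that the clique-intersection property survives this reattachment is the crux of this direction. In the replacement case the intersections with all other cliques are literally unchanged, since $v\notin K$ for every $K\neq N[v]$, so the property is inherited from $T'$. In the leaf case, for any clique $K$ of $T'$ we have $N[v]\cap K=N(v)\cap K\subseteq K'\cap K$ (again because $v\notin K$ and $N(v)\subseteq K'$); the clique-intersection property of $T'$ places $K'\cap K$ in every clique on the $K'$--$K$ path, so the smaller set $N[v]\cap K$ sits there too, and prepending the single new edge $N[v]\,K'$ extends this to the full $N[v]$--$K$ path. The base case of a single maximal clique is trivial. Combining the two directions then yields the stated equivalence.
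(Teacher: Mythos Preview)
The paper does not prove this proposition; it is stated as a cited result from \cite{blair1993introduction} and used purely as background machinery for the clique-tree framework. There is therefore no proof in the paper to compare your proposal against.

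Your proposal is essentially the standard textbook argument for this classical fact and is sound in outline. One small technical slip: you induct on $|\mathcal{K}_G|$, but deleting a simplicial vertex $v$ need not decrease this count --- when $N(v)$ is not contained in any other maximal clique of $G$, it becomes a new maximal clique of $G'$, and then $|\mathcal{K}_{G'}|=|\mathcal{K}_G|$. Inducting on $|V(G)|$ instead (which strictly decreases upon deleting $v$) fixes this immediately, and the rest of your construction and verification go through unchanged.
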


\begin{definition}[\emph{Induced-subtree property}]
	A clique tree $T$ satisfies the induced-subtree property if for every vertex $v\in V(G)$, the set of cliques containing $v$ induces a subtree of $T$, denoted by $T_v$.
\end{definition}

\begin{proposition}
\label{prop:blair2}
	\cite{blair1993introduction} The clique-intersection and induced-subtree properties are equivalent.
\end{proposition}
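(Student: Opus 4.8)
The plan is to prove the two implications separately, relying on a single structural fact about trees: the path between any two vertices is unique, and a set of vertices induces a subtree (a connected subgraph) if and only if it contains the entire path between every pair of its members. Both properties ultimately concern which cliques lie on the path between two given cliques, so this fact is the natural bridge between them.

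First I would establish clique-intersection $\Rightarrow$ induced-subtree. Fix a vertex $v\in V(G)$ and let $\mathcal{K}_v\subseteq\mathcal{K}_G$ denote the set of cliques containing $v$. To show $\mathcal{K}_v$ induces a subtree it suffices to verify connectedness. Take any two cliques $K,K'\in\mathcal{K}_v$; then $v\in K\cap K'$, and the clique-intersection property forces $K\cap K'$—and in particular $v$—into every clique on the $K$–$K'$ path in $T$. Hence the whole path lies in $\mathcal{K}_v$, so $\mathcal{K}_v$ is connected and, being a subgraph of the tree $T$, is itself a tree.

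For the converse, induced-subtree $\Rightarrow$ clique-intersection, I would take distinct cliques $K,K'$ and an arbitrary $v\in K\cap K'$. Since $K,K'\in\mathcal{K}_v$ and $\mathcal{K}_v$ induces a connected subtree, the unique $K$–$K'$ path of $T$ must lie within that subtree, so $v$ belongs to every clique on the path. As $v$ was an arbitrary element of $K\cap K'$, every clique on the path contains all of $K\cap K'$, which is exactly the clique-intersection property.

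The one point deserving care—and the only real obstacle—is the bridging observation that ``lying on the unique path between two cliques that both contain $v$'' is the same condition as ``belonging to the connected cluster of cliques containing $v$.'' I would make this explicit by invoking uniqueness of paths in a tree: any connected subgraph of a tree is closed under taking paths between its vertices, and conversely a vertex set closed under such paths is connected. Once this equivalence is in hand, both directions reduce to the elementary manipulations above, with the quantifier swap ``$K\cap K'\subseteq$ each clique on the path'' $\Leftrightarrow$ ``each $v\in K\cap K'$ lies in each clique on the path'' handling the passage between the set-wise and vertex-wise formulations.
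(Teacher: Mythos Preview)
The paper does not supply its own proof of this proposition; it is quoted verbatim as a result from \cite{blair1993introduction}, so there is nothing in the paper to compare against. Your argument is correct and is essentially the standard one: in a tree, a vertex set induces a connected subgraph if and only if it is closed under taking the unique path between any two of its members, and applying this to the set $\mathcal{K}_v$ of cliques containing a fixed vertex $v$ immediately gives both directions. The only cosmetic remark is that in the forward direction you could state explicitly that a connected subgraph of a tree is automatically acyclic and hence a tree, but you already note this in passing.
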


Since we work with a UCCG, by Propositions \ref{prop:blair1} and \ref{prop:blair2}, there exists a clique tree on $\mathcal{K}_G$, which satisfies the clique-intersection and induced-subtree properties. Efficient algorithms for generating such a tree is presented in \cite{blair1993introduction,vandenberghe2015chordal}. In the sequel, whenever we refer to clique tree $T=(\mathcal{K}_G,E(T))$, we assume it satisfies the clique-intersection and induced-subtree properties.
 Note that such a tree is not necessarily unique, yet we fix one and perform all operations on it. For the given UCCG, similar to \cite{he2015counting}, we partition the corresponding MEC by its rooted sub-classes, and calculate the size of each sub-class separately. However, we use the clique tree representation of the UCCG for counting.
Our approach enables us to use the memory in the counting process to make the counting more efficient, and provides us with a systematic approach for finding the orientations.


For the $r$-rooted essential graph $G^{(r)}$, we arbitrarily choose one of the cliques containing $r$ as the root of the tree $T=(\mathcal{K}_G,E(T))$, and denote this rooted clique tree by $T^{(r)}$. Setting a vertex as the root in a tree determines the parent of all vertices. For clique $K$ in $T^{(r)}$, denote its parent clique by $\Pa(K)$  . Following \cite{vandenberghe2015chordal}, we can partition each non-root clique into a separator set $\Sep(K)=K\cap\Pa(K)$, and a residual set $\Res(K)=K\setminus\Sep(K)$. For the root clique, the convention is to define $\Sep(K)=\emptyset$, and $\Res(K)=K$. The induced-subtree property implies the following result:
\begin{proposition}
\label{prop:van}
\cite{vandenberghe2015chordal}	
Let $G$ be the given UCCG, and let $T^{(r)}$ be the rooted clique tree,\\
(i) The clique residuals partition $V(G)$,\\
(ii) For each $u\in V(G)$, denote the clique for which $u$ is in its residual set by $K_u$, and the induced subtree of cliques containing $u$ by $T_u$. $K_u$ is the root of $T_u$. The other vertices of $T_u$ are the cliques that contain $u$ in their separator set.\\
(iii) The clique separators $\Sep(K)$, where $K$ ranges over all non-root cliques, are the minimal vertex separators of $G$.
\end{proposition}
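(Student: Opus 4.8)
The plan is to prove the three parts of Proposition~\ref{prop:van} by leveraging the induced-subtree property, which by Proposition~\ref{prop:blair2} is equivalent to the clique-intersection property that $T$ is assumed to satisfy. Throughout, I fix the rooted clique tree $T^{(r)}$, which orients every edge of $T$ away from the root and thereby determines $\Pa(K)$, $\Sep(K)=K\cap\Pa(K)$, and $\Res(K)=K\setminus\Sep(K)$ for each clique.

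For part (i), I would first show that every vertex $u\in V(G)$ lies in the residual of exactly one clique, so that the residuals cover $V(G)$ and are pairwise disjoint. The key tool is the induced-subtree property: the set $T_u$ of cliques containing $u$ forms a connected subtree of $T$. In a rooted tree, a nonempty subtree has a unique vertex closest to the root (its own root), call it $K_u$; I claim $u\in\Res(K_u)$ and $u\notin\Res(K)$ for every other clique $K$ containing $u$. Indeed, for any clique $K\in T_u$ with $K\neq K_u$, connectedness of $T_u$ forces $\Pa(K)$ to also lie in $T_u$, hence $u\in\Pa(K)$, so $u\in K\cap\Pa(K)=\Sep(K)$ and thus $u\notin\Res(K)$. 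Conversely, for $K_u$ itself, either $K_u=r$'s clique (where $\Res$ is the whole clique by convention) or $\Pa(K_u)\notin T_u$ (since $K_u$ is the root of the subtree), so $u\notin\Pa(K_u)$, giving $u\in\Res(K_u)$. This simultaneously establishes part (ii), since it identifies $K_u$ as the root of $T_u$ and characterizes the remaining vertices of $T_u$ as exactly those cliques containing $u$ in their separator set.

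For part (iii), I would show each $\Sep(K)=K\cap\Pa(K)$ is a minimal $a$--$b$ vertex separator for an appropriate pair $a,b$, and conversely that every minimal vertex separator arises this way. The forward direction uses the clique-intersection property: removing the edge $\{K,\Pa(K)\}$ from the tree splits $T$ into two components, and any path in $G$ between a vertex appearing only below the cut and a vertex appearing only above must pass through a vertex common to cliques on both sides, which by clique-intersection lies in $\Sep(K)$. Minimality follows because each vertex of the separator is forced: it is adjacent into both pieces through the cliques $K$ and $\Pa(K)$. The converse, that every minimal vertex separator of a chordal graph is realized as some clique separator in the tree, is the classical characterization of minimal separators in chordal graphs and can be cited from the sources \cite{blair1993introduction,vandenberghe2015chordal}.

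The main obstacle I anticipate is part (iii), specifically the careful bookkeeping needed to translate tree-edge cuts into genuine vertex separators of $G$ and to verify minimality rather than mere separation. Parts (i) and (ii) follow cleanly and almost mechanically from the induced-subtree property once the root $K_u$ of each subtree $T_u$ is identified, so the real content lies in tying the combinatorics of the tree to connectivity in $G$. Since this correspondence between clique-tree separators and minimal vertex separators is a standard result, I would lean on the cited references for the converse direction and present the forward direction via the edge-cut argument above, keeping the exposition self-contained for the parts actually used in the counting algorithm.
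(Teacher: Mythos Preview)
Your proposal is correct, but there is nothing to compare it against: the paper does not prove Proposition~\ref{prop:van}. It is stated as a known result, attributed to \cite{vandenberghe2015chordal}, and the appendices contain no argument for it. So you have supplied a proof where the paper simply cites one.

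That said, the route you take is exactly the standard one found in the cited sources \cite{blair1993introduction,vandenberghe2015chordal}: parts (i) and (ii) drop out of the induced-subtree property by identifying $K_u$ as the unique clique of $T_u$ closest to the root, and part (iii) is the classical correspondence between tree-edge cuts and minimal vertex separators in chordal graphs. One small point worth tightening in part (iii): to exhibit witnesses for minimality you should pick $a\in K\setminus\Pa(K)=\Res(K)$ and $b\in\Pa(K)\setminus K$, both of which are nonempty since $K$ and $\Pa(K)$ are distinct maximal cliques; the clique-intersection property then confines $a$ to the $K$-side subtree and $b$ to the $\Pa(K)$-side, and every $s\in\Sep(K)$ is adjacent to both, so no proper subset of $\Sep(K)$ separates $a$ from $b$. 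With that detail made explicit, your forward direction is complete, and citing the references for the converse is exactly what the paper itself does for the whole proposition.
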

Note that sets $\Sep(K)$ and $\Res(K)$ depend on the choice of root. However, all clique trees have the same vertices (namely, maximal cliques of $G$) and the same clique separators (namely, the minimal vertex separators of $G$). Also, since there are at most $p$ cliques in a chordal graph with $p$ vertices, there are at most $p-1$ edges in a clique tree and, hence, at most $p-1$ minimal vertex separators. 

With a small deviation from the standard convention, in our approach, for the root clique of $T^{(r)}$, we define $\Sep(K)=\{r\}$, and $\Res(K)=K\setminus\{r\}$.
Recall from Proposition \ref{prop:van} that for each $u\in V(G)$, $K_u$ denotes the clique for which $u$ is in its residual set, and this clique is unique. We will need the following results for our orientation approach. All the proofs are provided in the appendix.
\begin{lemma}
\label{lem:main}
	If $u\rightarrow v\in G^{(r)}$, then $u\in\Sep(K_v)$.
\end{lemma}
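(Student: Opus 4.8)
The plan is to prove the contrapositive: assuming $u$ and $v$ are adjacent in $G$ but $u\notin\Sep(K_v)$, I would exhibit a single DAG in the $r$-rooted sub-class that orients this edge as $v\to u$. Since $G^{(r)}$ contains the directed edge $u\to v$ only if \emph{every} DAG of the sub-class agrees on $u\to v$, producing one DAG with $v\to u$ shows $u\to v\notin G^{(r)}$, which is what the contrapositive requires.

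First I would record a purely combinatorial fact about the rooted clique tree $T^{(r)}$. Because $u,v$ are adjacent, some maximal clique contains both, so by the induced-subtree property (Proposition \ref{prop:van}) the subtrees $T_u$ and $T_v$ meet, and $T_u\cap T_v$ is again a subtree; let $K^{uv}$ be its topmost clique. Its parent $\Pa(K^{uv})$ cannot contain both $u$ and $v$ (else $K^{uv}$ would not be topmost), which yields a dichotomy. If $\Pa(K^{uv})$ contains $u$ but not $v$, then $K^{uv}$ is topmost in $T_v$, i.e.\ $K^{uv}=K_v$, and since $u\in K^{uv}\cap\Pa(K_v)$ we get $u\in K_v\cap\Pa(K_v)=\Sep(K_v)$. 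Otherwise $K^{uv}=K_u$ with $v\in\Sep(K_u)$ and $u\in\Res(K_u)$ (possibly $K_u=K_v$, with both vertices in a common residual). Thus the hypothesis $u\notin\Sep(K_v)$ excludes the first case, so $v$ lies weakly above $u$ in the tree. The boundary situations $u=r$ or $v=r$ I would treat directly with the root convention $\Sep(\text{root})=\{r\}$: $u\to r$ is impossible since $r$ is a source, and for $u=r$ the same case split always lands in $r\in\Sep(K_v)$.

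The core step is to realize the orientation $v\to u$ by an explicit member of the sub-class. I would totally order $V(G)$ by placing $r$ first, then sweeping the cliques of $T^{(r)}$ in a parent-before-child order and listing each clique's residual vertices, arranging $v$ before $u$ (legitimate because $v\in\Sep(K_u)$ first appears in a strict ancestor of $K_u$, or $u,v$ share a residual whose internal order is free). Orienting each edge of $G$ from the earlier to the later endpoint gives an acyclic orientation $D$ with skeleton $G$ and with $r$ a source, hence $v\to u$. The claim to verify is that $D$ has no v-structure: for any $w$, every neighbor of $w$ lies in a clique of $T_w$, and a neighbor outside $K_w$ first appears in a strict descendant of $K_w$ and so comes after $w$; therefore all earlier neighbors of $w$ lie in the clique $K_w$ and form a clique, so $w$ cannot be the head of a v-structure. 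By Lemma \ref{lem:verma}, $D\in\MEC(G)$, and as $r$ is a source it lies in the $r$-rooted sub-class, giving $u\to v\notin G^{(r)}$ and completing the contrapositive.

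I expect the main obstacle to be the verification that this clique-tree ordering yields a genuine element of $\MEC(G)$ — the no-v-structure argument together with the confirmation that $r$ is a source — rather than the tree dichotomy, which is routine once the induced-subtree property is invoked. A secondary nuisance is the bookkeeping forced by the modified convention $\Sep(\text{root})=\{r\}$, $\Res(\text{root})=K_{\text{root}}\setminus\{r\}$, so that the argument stays correct when $K_v$ is the root clique or when one of $u,v$ equals $r$.
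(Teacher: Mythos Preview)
Your argument is correct and takes a genuinely different route from the paper's proof. The paper proceeds operationally: it stratifies $V(G)$ into levels by distance from $r$, invokes an external lemma (Bernstein--Saeed--Squires--Uhler) to orient all between-level edges away from $r$, and then inducts on the level index, using two auxiliary claims about the clique tree inside the induction step. In-level directed edges are handled by a separate case analysis of Meek's orientation rules, reduced to a recursive application of the between-level argument. Your approach bypasses all of this: from the induced-subtree dichotomy you deduce that $u\notin\Sep(K_v)$ forces $v$ to occur weakly above $u$ in the residual partition, and then you exhibit a single perfect-elimination-type ordering (root $r$ first, then residuals top-down) realizing $v\to u$. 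The no-v-structure verification---earlier neighbours of any $w$ all lie in $K_w$---is exactly the standard characterisation of perfect elimination orders on chordal graphs, so the constructed DAG lies in the $r$-rooted sub-class, and one witness suffices for the contrapositive.

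What each buys: the paper's argument is closer to the mechanics of how $G^{(r)}$ is actually produced by orientation propagation, which dovetails with the later use of emission sets; your argument is shorter, self-contained (no external lemma, no Meek rules), and makes the role of the clique tree more transparent. Your anticipated ``main obstacle''---that the ordering yields a v-structure-free orientation with $r$ as the unique source---is in fact the routine part, since it is the classical fact that a reverse perfect elimination order of a chordal graph gives an acyclic orientation whose predecessor sets are cliques; the only care needed is exactly the boundary bookkeeping you flag for $u=r$, $v=r$, and $K_v$ equal to the root clique under the modified convention $\Sep(K_{\mathrm{root}})=\{r\}$.
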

\begin{corollary}
\label{cor:dir}
	If $u\rightarrow v\in G^{(r)}$, then $v\not\in\Sep(K_u)$.
\end{corollary}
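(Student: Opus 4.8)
The plan is to derive Corollary~\ref{cor:dir} as a direct consequence of Lemma~\ref{lem:main}, arguing by contradiction. Suppose $u\rightarrow v\in G^{(r)}$, and suppose toward a contradiction that $v\in\Sep(K_u)$. By Lemma~\ref{lem:main} applied to the edge $u\rightarrow v$, we know $u\in\Sep(K_v)$. The goal is to show that $v\in\Sep(K_u)$ together with $u\in\Sep(K_v)$ forces a contradiction with the basic structure of the clique tree, in particular with part~(ii) of Proposition~\ref{prop:van}, which pins down $K_u$ and $K_v$ as the \emph{roots} of the induced subtrees $T_u$ and $T_v$ respectively.

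First I would unpack what the two membership statements say about the clique tree. Since $u\in\Sep(K_v)=K_v\cap\Pa(K_v)$, the clique $K_v$ contains $u$, so $K_v$ lies in the induced subtree $T_u$; moreover, because $u$ is in the separator of $K_v$ (not its residual), $K_v\neq K_u$, and by Proposition~\ref{prop:van}(ii) the clique $K_u$ is the root of $T_u$ and every other clique of $T_u$ (including $K_v$) contains $u$ in its separator set. The point I want to extract is that $K_v$ is a strict descendant of $K_u$ in $T^{(r)}$: since $T_u$ is a subtree rooted at $K_u$ and $K_v$ is a non-root member of it, $K_v$ sits strictly below $K_u$ along the tree. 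Symmetrically, $v\in\Sep(K_u)$ implies $K_u$ lies in $T_v$, is distinct from $K_v$ (the root of $T_v$), and hence $K_u$ is a strict descendant of $K_v$.

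The contradiction then falls out from combining these two facts: $K_v$ is a strict descendant of $K_u$ while $K_u$ is a strict descendant of $K_v$, which is impossible in a tree since the strict-descendant relation is a strict partial order and cannot contain a $2$-cycle. Hence the assumption $v\in\Sep(K_u)$ is untenable, giving $v\notin\Sep(K_u)$ as claimed.

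The step I expect to be the main obstacle is making precise the claim that $u\in\Sep(K_v)$ forces $K_v$ to be a strict descendant of $K_u$ (rather than merely a distinct node of $T_u$). This requires using the induced-subtree property carefully: $T_u$ is the subtree of $T^{(r)}$ induced by all cliques containing $u$, and I must verify that within the ambient rooted tree $T^{(r)}$ the root of this subtree is exactly the clique-closest-to-the-root-$K_u$, so that all other members of $T_u$ genuinely lie below $K_u$. This is essentially the content of Proposition~\ref{prop:van}(ii), but I would want to check that the notion of ``root of $T_u$'' there coincides with the ancestor relation inherited from $T^{(r)}$, so that ``$K_v$ is a non-root vertex of $T_u$'' indeed translates into ``$K_v$ is a proper descendant of $K_u$ in $T^{(r)}$.'' Once that identification is in hand, the antisymmetry of the descendant order closes the argument immediately.
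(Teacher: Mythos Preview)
Your proposal is correct and is essentially the paper's own argument: the paper also proceeds by contradiction, invokes Lemma~\ref{lem:main} to get $u\in\Sep(K_v)$, and then uses Proposition~\ref{prop:van}(ii) to conclude that $K_v$ lies strictly below $K_u$ while simultaneously $K_u$ lies strictly below $K_v$. The one point you flag as a potential obstacle---that ``root of $T_u$'' in Proposition~\ref{prop:van}(ii) coincides with the clique of $T_u$ closest to the root of $T^{(r)}$---is handled in the paper by the one-line observation that $K_w$ is at the highest level (smallest distance to the root) among all cliques containing $w$, which is exactly the identification you were after.
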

Lemma \ref{lem:main} states that parents of any vertex $v$ are elements of $\Sep(K_v)$. But, not all elements of $\Sep(K_v)$ are necessarily parents of $v$. Our objective is to find a necessary and sufficient condition to determine the parents of a vertex.
\begin{lemma}
\label{lem:all}
	If $u\rightarrow v\in G^{(r)}$, then for every vertex  $w\in\Res(K_v)$, $u\rightarrow w\in G^{(r)}$.
\end{lemma}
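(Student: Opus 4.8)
The plan is to show that the edge between $u$ and $w$, which certainly exists, is oriented $u\rightarrow w$ in $G^{(r)}$ by exploiting that $v$ and $w$ occupy interchangeable positions inside the residual of their common clique. First I would record the elementary geometry: since $u\in\Sep(K_v)$ and $v,w\in\Res(K_v)$ (note $K_w=K_v$), all three of $u,v,w$ lie in the single maximal clique $K_v$, so they form a triangle and the edge between $u$ and $w$ is present. Moreover $u\notin\Res(K_v)$, so $u$ sits strictly above $v$ and $w$ in the rooted clique tree $T^{(r)}$.

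The key structural ingredient I would isolate is an adjacency-transfer property: for every vertex $s$ whose defining clique $K_s$ is a proper ancestor of $K_v$, one has $s\sim v \iff s\in\Sep(K_v) \iff s\sim w$. This follows from the induced-subtree property: by Proposition \ref{prop:van}, $T_v$ and $T_w$ are both rooted at $K_v$, so neither $v$ nor $w$ appears in any proper ancestor of $K_v$; hence an ancestor vertex $s$ can be adjacent to $v$ (resp. $w$) only by belonging to $K_v$ itself, i.e. to $\Sep(K_v)$, and membership in $\Sep(K_v)$ makes it adjacent to both. Its relevance is that, by Lemma \ref{lem:main} applied to any directed edge $s\rightarrow u\in G^{(r)}$, the tail satisfies $s\in\Sep(K_u)$, and $K_u$ is a proper ancestor of $K_v$; thus every compelled parent of $u$ is an ancestor vertex to which adjacency-transfer applies.

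With these in hand I would argue by induction on the order in which the directed edges of $G^{(r)}$ are produced by Meek's orientation rules \cite{meek1995causal}, starting from orienting every edge at $r$ away from $r$. The induction hypothesis is that the conclusion of the lemma already holds for every directed edge created before $u\rightarrow v$. For the base case, edges out of the root, the claim is immediate: if $r\rightarrow v$ then $r\in\Sep(K_v)$, so by transfer $r\sim w$, and since $r$ is a root it must point outward, giving $r\rightarrow w$. For the inductive step I case on the rule that orients $u\rightarrow v$. If it is the v-structure-avoidance rule, there is a compelled edge $s\rightarrow u$ with $s\not\sim v$; by the previous paragraph $s$ is an ancestor vertex, so $s\not\sim w$ as well, which (since $w\rightarrow u$ would create the forbidden v-structure $s\rightarrow u\leftarrow w$) forces the edge between $u$ and $w$ to be $u\rightarrow w$. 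If it is the acyclicity-propagation rule, there is a vertex $c$ with $u\rightarrow c\rightarrow v$; here $c\rightarrow v$ is an earlier compelled edge whose head $v$ lies in $\Res(K_v)$, so the induction hypothesis yields $c\rightarrow w$, and then $u\rightarrow c\rightarrow w$ together with the existing edge between $u$ and $w$ forces $u\rightarrow w$.

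The remaining rules are handled in the same spirit, and this is where the main work lies: each auxiliary vertex appearing in a configuration must be correctly classified as either a compelled parent of $u$ (an ancestor vertex, handled by adjacency-transfer) or a compelled parent of a residual vertex such as $v$ (handled by the induction hypothesis, since its head lies in $\Res(K_v)$), and one must check that the non-adjacency premises of the rule survive replacing $v$ by $w$. The principal obstacle is precisely this bookkeeping for the more elaborate configurations: one must verify that no configuration forcing $u\rightarrow v$ can rely on a vertex that is adjacent to $w$ but not to $v$. Such vertices are necessarily clique-tree descendants of $K_v$, and by Lemma \ref{lem:main} they never supply the compelled in-edges of $u$ that drive the forcing; they are therefore red herrings. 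Once this classification is settled, every witness for $u\rightarrow v$ transfers to a witness for $u\rightarrow w$, completing the induction.
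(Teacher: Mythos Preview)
Your approach is plausible but vastly more elaborate than what the paper does. The paper's proof is essentially three lines: it uses the known fact (He, Jia, and Yu 2015) that $G^{(r)}$ is a \emph{chain graph}, hence has no partially directed cycles. Since $u,v,w$ all lie in $K_v$ they form a triangle; given $u\rightarrow v$, if $u\rightarrow w$ fails then the triangle forces $w\rightarrow v$ (otherwise $u\rightarrow v\,\text{--}\,w\,\text{--}\,u$ would be a partially directed cycle). But $w\rightarrow v$ together with Lemma~\ref{lem:main} gives $w\in\Sep(K_v)$, contradicting $w\in\Res(K_v)$. No induction on Meek rules, no adjacency-transfer, no case analysis.

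Your route, by contrast, re-derives the orientation from scratch via induction on the sequence of Meek-rule applications. The Rule~1 case you spell out is correct, and your adjacency-transfer lemma is valid (and is the right structural observation if one insists on this approach). However, the ``main work'' you anticipate for the remaining rules is a mirage: in a UCCG with a designated root, only Rule~1 is ever needed to produce $G^{(r)}$---the paper establishes exactly this in its proof of Lemma~\ref{lem:main} (Appendix~A, the argument that Rule~2 configurations cannot be the first to fire). So your sketch is complete once you know this, but you would be re-proving machinery that the chain-graph property already packages for free. The paper's proof buys brevity by citing that global structural result; yours would buy a self-contained argument at the cost of reproducing pieces of it.
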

Lemma \ref{lem:all} implies that for every clique $K$, any vertex $u\in\Sep(K)$ either has directed edges to all elements of $\Res(K)$ in $G^{(r)}$, or has no directed edges to any of the elements of $\Res(K)$. For clique $K$, we define the \emph{emission set}, $\Em(K)$, as the subset of $\Sep(K)$ which has directed edges to all elements of $\Res(K)$ in $G^{(r)}$. Lemmas \ref{lem:main} and \ref{lem:all} lead to the following corollary:
\begin{corollary}
\label{cor:main}
	$\Em(K_v)$ is the set of parents of $v$ in $G^{(r)}$.
\end{corollary}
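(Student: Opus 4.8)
The plan is to prove the set equality $\Em(K_v) = \{\text{parents of } v \text{ in } G^{(r)}\}$ by establishing the two inclusions separately, since the definition of the emission set packages together exactly the two properties that Lemmas \ref{lem:main} and \ref{lem:all} supply. Throughout I would lean on the fact, furnished by Proposition \ref{prop:van}(ii), that $v\in\Res(K_v)$ and that $K_v$ is the unique clique having $v$ in its residual set.

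First I would show that every parent of $v$ lies in $\Em(K_v)$. Suppose $u\rightarrow v\in G^{(r)}$. By Lemma \ref{lem:main}, $u\in\Sep(K_v)$, which places $u$ in the ambient set over which $\Em(K_v)$ is defined. By Lemma \ref{lem:all}, $u$ has a directed edge to every vertex $w\in\Res(K_v)$. These are precisely the two defining conditions of the emission set, so $u\in\Em(K_v)$.

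Conversely, I would show $\Em(K_v)\subseteq\{\text{parents of }v\}$. Take any $u\in\Em(K_v)$. By the definition of the emission set, $u$ has directed edges to all elements of $\Res(K_v)$; since $v\in\Res(K_v)$, in particular $u\rightarrow v\in G^{(r)}$, so $u$ is a parent of $v$. Combining the two inclusions yields the claim.

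I do not anticipate a genuine obstacle here: the corollary is essentially a bookkeeping consequence of the two preceding lemmas once the definition of $\Em$ is unfolded, and all the real graph-theoretic work has already been done in proving Lemmas \ref{lem:main} and \ref{lem:all}. The only point that requires care is the backward inclusion, where one must explicitly invoke $v\in\Res(K_v)$ to convert the statement ``$u$ emits to \emph{all} of $\Res(K_v)$'' into ``$u\rightarrow v$''; this is the step where the uniqueness of $K_v$ from Proposition \ref{prop:van}(ii) is used.
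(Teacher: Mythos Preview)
Your proof is correct and follows essentially the same argument as the paper: one inclusion is immediate from the definition of $\Em(K_v)$ together with $v\in\Res(K_v)$, and the other combines Lemma~\ref{lem:main} (to place a parent in $\Sep(K_v)$) with Lemma~\ref{lem:all} (to upgrade it to $\Em(K_v)$). The only cosmetic difference is that the paper handles the inclusions in the opposite order and phrases the $\Pa(v)\subseteq\Em(K_v)$ direction as a contradiction, but the underlying reasoning is identical.
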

This means that the necessary and sufficient condition for $u$ to be a parent of $v$ is $u\in\Em(K_v)$. Therefore, for any clique $K$, we need to characterize its emission set. We will show that the $\Em(K)$ is the subset of $\Sep(K)$ which satisfies the following emission condition:
\begin{definition}[Emission condition for a vetex]
\label{def:emissionV}
	We say vertex $v$ satisfies the emission condition in clique $K$ if $v\in\Sep(K)$, and $\Em(K_v)\not\subseteq\Sep(K)$.
\end{definition}
As a convention, we assume that the root variable $r$ satisfies the emission condition in all the cliques containing it.
\begin{theorem}
\label{thm:main1}
	$u\rightarrow v\in G^{(r)}$ if and only if $u$ satisfies the emission condition in clique $K_v$ in $T^{(r)}$.
\end{theorem}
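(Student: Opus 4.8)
The plan is to first collapse the theorem onto a single combinatorial equivalence. By Corollary \ref{cor:main}, $u\rightarrow v\in G^{(r)}$ is the same statement as $u\in\Em(K_v)$, and the same corollary (applied at $u$) identifies $\Em(K_u)$ with the parent set of $u$. Since $\Em(K_v)\subseteq\Sep(K_v)$ by construction, the membership clause $u\in\Sep(K_v)$ of the emission condition is automatic once $u\in\Em(K_v)$, so it suffices to prove, for a fixed $u\in\Sep(K_v)$, the equivalence $u\in\Em(K_v)\iff\Em(K_u)\not\subseteq\Sep(K_v)$ (the root $r$ being disposed of by the stated convention, using that $\Em(K_u)=\varnothing$ exactly when $u=r$). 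The remaining work is therefore to show that $u$ is a parent of $v$ precisely when $u$ has at least one parent outside $\Sep(K_v)$.

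The second step is a purely geometric translation of ``outside $\Sep(K_v)$'' into ``non-adjacent to $v$'', which I would carry out using the induced-subtree property. For a parent $w$ of $u$, Lemma \ref{lem:main} gives $w\in\Sep(K_u)\subseteq K_u$, and since $u\in\Sep(K_v)$ puts $K_v$ in the subtree $T_u$ with $K_v\neq K_u$, the clique $K_u$ is a proper ancestor of $K_v$. Now if $w\sim v$, the subtrees $T_w$ and $T_v$ share a clique; $T_v$ is rooted at $K_v$ and $T_w$ is connected and already contains the ancestor $K_u$, so the path forces $K_v\in T_w$, i.e.\ $w\in K_v$. Because $K_w$ is a strict ancestor of $K_v$ we have $w\notin\Res(K_v)$, hence $w\in\Sep(K_v)$; the converse inclusion is immediate since $\Sep(K_v)\subseteq K_v\ni v$. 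Thus for parents of $u$, adjacency to $v$ is equivalent to membership in $\Sep(K_v)$, and the target equivalence becomes: $u\rightarrow v\iff u$ has a parent not adjacent to $v$.

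The easy implication is $(\Leftarrow)$: if $w\rightarrow u\in G^{(r)}$ with $w\not\sim v$, then $w\rightarrow u$ holds in every DAG of the $r$-rooted sub-class, and orienting $v\rightarrow u$ would create the triple $w\rightarrow u\leftarrow v$ with $w,v$ non-adjacent, a v-structure forbidden inside a chain component (the corollary to Lemma \ref{lem:verma}); by Corollary \ref{cor:dir} the edge cannot be $v\rightarrow u$ either way, so it is $u\rightarrow v$ in every member of the sub-class and hence $u\in\Em(K_v)$. For the $(\Rightarrow)$ direction I would argue by contraposition: assume $u\neq r$ and $\Em(K_u)\subseteq\Sep(K_v)$, so that every parent $w$ of $u$ is adjacent to $v$; acyclicity applied to $w\rightarrow u\rightarrow v$ (a reversed edge would close the cycle $v\rightarrow w\rightarrow u\rightarrow v$) shows each such $w$ is in fact a parent of $v$, giving $\Em(K_u)\subseteq\Em(K_v)$. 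I would then show the edge $u-v$ is \emph{undirected} in $G^{(r)}$ by exhibiting a DAG of the sub-class with $v\rightarrow u$, obtained by reversing $u\rightarrow v$: the inclusion $\Em(K_u)\subseteq\Em(K_v)$ means reversing creates no new v-structure at $u$, while any undirected neighbor $z$ of $v$ must satisfy $z\sim u$ (otherwise $u\rightarrow v$ together with $z-v$ and $z\not\sim u$ would force $v\rightarrow z$, contradicting undirectedness), so no v-structure at $v$ is destroyed; this is a covered-edge-type reversal preserving skeleton, v-structures, acyclicity, and the root $r$.

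The main obstacle is exactly this last reversal argument in $(\Rightarrow)$: turning the local facts ``$\Em(K_u)\subseteq\Em(K_v)$'' and ``every undirected neighbor of $v$ is adjacent to $u$'' into a \emph{globally} valid member of the $r$-rooted sub-class with the edge reversed, rather than merely a consistent local triangle. One must confirm simultaneously that the reversal keeps $r$ the unique source, introduces no directed cycle through vertices away from the triangle $\{w,u,v\}$, and neither creates nor destroys any v-structure anywhere in the graph; marshalling the adjacency consequences of Meek-type forcing into this single covered-edge reversal is the delicate point. By contrast, the reduction via Corollary \ref{cor:main}, the geometric translation through the induced-subtree property, and the $(\Leftarrow)$ direction are all routine given the lemmas already established.
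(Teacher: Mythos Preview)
Your geometric translation---that for $w\in\Em(K_u)$ with $u\in\Sep(K_v)$ one has $w\in\Sep(K_v)\iff w\sim v$---is correct and makes the $(\Leftarrow)$ direction cleaner than the paper's ``if'' side, which reaches $w\not\sim v$ only after a case split on the $w$--$v$ edge using Corollary~\ref{cor:dir}.

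The gap you flag in $(\Rightarrow)$ is real, and the reversal argument as written does not close. Reversing $u\rightarrow v$ in a DAG $D$ of the sub-class avoids a new v-structure at $u$ only if \emph{every} parent of $u$ in $D$ is adjacent to $v$; you control $\Em(K_u)$, but an edge $x-u$ that is undirected in $G^{(r)}$ may be oriented $x\rightarrow u$ in $D$ with no guarantee that $x\sim v$. One could try to choose $D$ with $u$ the root of its own chain component in $G^{(r)}$, forcing $\Pa_D(u)=\Em(K_u)$, but then acyclicity of the reversed DAG and preservation of the global root $r$ still need separate arguments. (Also, your use of ``acyclicity applied to $w\rightarrow u\rightarrow v$'' already presumes $u\rightarrow v\in G^{(r)}$, so the logic is a proof by contradiction rather than contraposition; harmless, but worth making explicit.) The paper sidesteps all of this: inside the proof of Lemma~\ref{lem:main} it is established that every directed edge of $G^{(r)}$ with non-root tail is forced by Meek's first rule alone, so $u\rightarrow v\in G^{(r)}$ already supplies some $w\in\Pa(u)$ with $w\not\sim v$. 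Under the hypothesis $\Em(K_u)\subseteq\Sep(K_v)\subseteq K_v$, every parent of $u$ lies in $K_v$ and is therefore adjacent to $v$, an immediate contradiction. The structural fact from Lemma~\ref{lem:main} replaces your global covered-edge construction with a one-line local argument.
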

Note that for a clique $K$, in order to find $\Em(K)$ using Definition \ref{def:emissionV}, we need to learn the emission set of some cliques on the higher levels of the tree. Hence, the emission sets must be identified on the tree from top to bottom.

\begin{figure}[t]
\begin{center}
\includegraphics[scale=0.55]{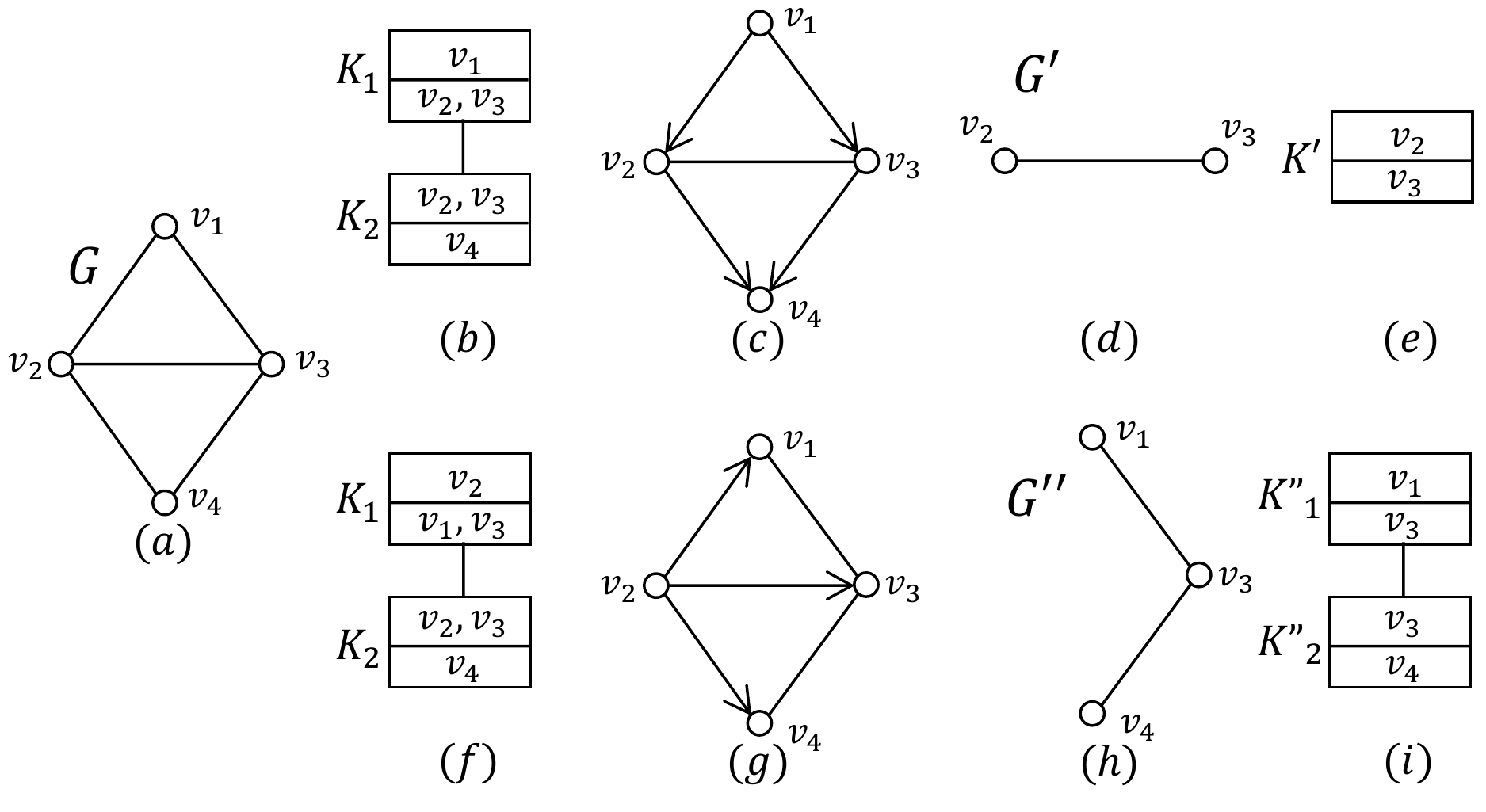}
\caption{Graphs related to Example \ref{ex:1}.}
\label{fig:ex1}
\end{center}
\end{figure}

After setting vertex $r$ as the root, Theorem \ref{thm:main1} allows us to find the orientation of directed edges in $G^{(r)}$ as follows. First, we form $T^{(r)}$. Then, for each $K\in\mathcal{K}_G$, starting from top to bottom of $T^{(r)}$, we identify all vertices which satisfy the emission condition to obtain $\Em(K)$.  
Finally, in each clique $K$, we orient the edges from all the variables in $\Em(K)$ towards all the variables in $\Res(K)$.
\begin{example}
\label{ex:1}
Assume the UCCG in Figure \ref{fig:ex1}$(a)$ is the given essential graph.
Setting vertex $v_1$ as the root of $G$ (by symmetry, $v_4$ is similar), the corresponding clique tree $T^{(v_1)}$ is shown in Figure \ref{fig:ex1}$(b)$, where in each clique, the first and the second rows represent the separator and the residual sets, respectively. In this clique tree, we obtain $\Em(K_1)=\{v_1\}$, and $\Em(K_2)=\{v_2,v_3\}$. Hence, the directed edges are $v_1\rightarrow v_2$, $v_1\rightarrow v_3$, $v_2\rightarrow v_4$, and $v_3\rightarrow v_4$. This results in $G^{(v_1)}$ in Figure \ref{fig:ex1}$(c)$, which is an essential graph with a single chain component $G'$, (Figure \ref{fig:ex1}$(d)$).
Setting vertex $v_2$ as the root (by symmetry, $v_3$ is similar), the corresponding clique tree $T'^{(v_2)}$ is shown in Figure \ref{fig:ex1}$(e)$. In this clique tree, $\Em(K')=\{v_2\}$ and hence, the directed edge is $v_2\rightarrow v_3$. This results in a directed graph, thus, $Size(G'^{(v_2)})=1$. Similarly, $Size(G'^{(v_3)})=1$. Therefore, from \eqref{eq:sum}, we have $Size(G^{(v_1)})=Size(G'^{(v_2)})+Size(G'^{(v_3)})=2$. Similarly, we have $Size(G^{(v_4)})=2$.

Setting vertex $v_2$ as the root of $G$ (by symmetry, $v_3$ is similar), the corresponding clique tree $T^{(v_2)}$ is shown in Figure \ref{fig:ex1}$(f)$. In this clique tree, $\Em(K_1)=\Em(K_2)=\{v_2\}$, and hence, the directed edges are $v_2\rightarrow v_1$, $v_2\rightarrow v_3$, and $v_2\rightarrow v_4$. $G^{(v_2)}$ shown in Figure \ref{fig:ex1}$(g)$ is the essential graph with a single chain component $G''$, shown in Figure \ref{fig:ex1}$(h)$.
Setting vertex $v_1$ as the root, the corresponding clique tree $T''^{(v_1)}$ is shown in Figure \ref{fig:ex1}$(i)$. In this clique tree $\Em(K''_1)=\{v_1\}$ and $\Em(K''_2)=\{v_3\}$. Hence, the directed edges are $v_1\rightarrow v_3$ and $v_3\rightarrow v_4$. The result is a directed graph, thus, $Size(G''^{(v_1)})=1$. Similarly, $Size(G''^{(v_3)})=1$ and $Size(G''^{(v_4)})=1$. Therefore, from \eqref{eq:sum}, we have $Size(G^{(v_2)})=Size(G''^{(v_1)})+Size(G''^{(v_3)})+Size(G''^{(v_4)})=3$. Similarly, we have $Size(G^{(v_3)})=3$.
 
Finally, using equation \eqref{eq:sum}, we obtain that $Size(G)=\sum_i Size(G^{(v_i)})=10$.
\end{example}



\begin{algorithm}[t]
\begin{algorithmic}[1]
\State {\bf Input:} Essential graph $G^*$, with chain components
\State \hspace{10mm} $\{G_1,\cdots G_c\}$.
\State {\bf Return:} $\prod_{i=1}^c\textsc{Size}(G_i)$\\ 
\hrulefill
\Function{Size}{$G$}
\State Construct a clique tree $T=(\mathcal{K}_G,E(T))$.
\If{$T\in$ Memory}
\State Load $[T,Size_{T}]$, {\bf Return:} $Size_{T}$
\Else
\For{$v\in V(G)$}
\State Set a clique $K\in T_v$ as the root to form $T^{(v)}$.
\State $Size_{T^{(v)}}=\textsc{RS}(T^{(v)}, K, \Sep(K))$.
\EndFor
\State Save $[T, \sum_{v\in V} Size_{T^{(v)}}]$, {\bf Return} $\sum_{v} Size_{T^{(v)}}$
\EndIf
\EndFunction
\caption{MEC Size Calculator}
\label{algorithm:size}
\end{algorithmic}
\end{algorithm}

\subsection{Algorithm}
\label{sec:algorithm}

In this subsection, we present an efficient approach for the counting process. 
In a rooted clique tree $T^{(r)}$, for any clique $K$, let $T^{(K)}$ be the maximal subtree of $T^{(r)}$ with $K$ as its root, and let $\Res(T^{(K)}):=\bigcup_{K'\in T^{(K)}}\Res(K')$.
Also, for vertex sets $S_1,S_2\subseteq V$, let $[S_1,S_2]$ be the set of edges with one end point in $S_1$ and the other end point in $S_2$.
\begin{lemma}
\label{lem:edgecut}
	For any clique $K$, $[\Sep(K),\Res(T^{(K)})]$ is an edge cut.
\end{lemma}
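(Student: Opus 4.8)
The plan is to show that the edge set in question is precisely the cut separating $R := \Res(T^{(K)})$ from the remaining vertices; that is, to establish the set equality $[\Sep(K),R]=[V(G)\setminus R,R]$. Since the right-hand side is, by definition, the set of edges crossing the vertex bipartition $(R,\,V(G)\setminus R)$, i.e.\ an edge cut, proving this equality yields the lemma immediately.

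First I would locate $\Sep(K)$ relative to $R$. By Proposition \ref{prop:van}(i) the residual sets partition $V(G)$, so each vertex $u$ lies in a unique residual $\Res(K_u)$. For $u\in\Sep(K)$ the induced subtree $T_u$ contains both $K$ and $\Pa(K)$, so by Proposition \ref{prop:van}(ii) its root $K_u$ lies strictly above $K$ in $T^{(r)}$; hence $K_u\notin T^{(K)}$ and $u\in\Res(K_u)\subseteq V(G)\setminus R$. This gives $\Sep(K)\subseteq V(G)\setminus R$, from which the trivial inclusion $[\Sep(K),R]\subseteq[V(G)\setminus R,R]$ follows.

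The substance is the reverse inclusion. I would take an arbitrary crossing edge $\{a,b\}$ with $a\in R$ and $b\in V(G)\setminus R$ and argue that $b\in\Sep(K)$. Since $a$ and $b$ are adjacent they lie in a common maximal clique $K''$, so $K''\in T_a\cap T_b$. Because $a\in R$ its home clique $K_a$ lies in $T^{(K)}$, and as $T_a$ is rooted at $K_a$ the whole subtree $T_a$ is contained in $T^{(K)}$; in particular $K''\in T^{(K)}$, so $K''$ is $K$ or a descendant of $K$. On the other hand $b\notin R$ forces $K_b\notin T^{(K)}$, making $K_b$ a strict ancestor of $K$, so the unique tree path from $K_b$ down to $K''$ passes first through $\Pa(K)$ and then through the root $K$ of $T^{(K)}$. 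Now $b\in K_b\cap K''$, so by the clique-intersection property $b$ belongs to every clique on that path; in particular $b\in\Pa(K)$ and $b\in K$, whence $b\in K\cap\Pa(K)=\Sep(K)$. This establishes $[V(G)\setminus R,R]\subseteq[\Sep(K),R]$ and hence the desired equality.

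The main obstacle is the combinatorial tree bookkeeping in the third paragraph: correctly placing each of $K_a,K_b,K''$ relative to the subtree $T^{(K)}$, verifying that the path from $K_b$ to $K''$ is forced through both $\Pa(K)$ and $K$, and invoking the clique-intersection property along the right segment to pin down $b\in\Sep(K)$. The overall-root clique warrants a separate sanity check because of the modified convention $\Sep(K)=\{r\}$: there $R=V(G)\setminus\{r\}$ and the cut reduces to the edges incident to $r$, so the equality holds directly.
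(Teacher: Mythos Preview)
Your proof is correct and follows essentially the same route as the paper: both arguments set $S=\Res(T^{(K)})$ and show $[\Sep(K),S]=[S,\bar S]$ by using the induced-subtree/clique-intersection property to rule out adjacencies between $S$ and $V(G)\setminus(\Sep(K)\cup S)$. Your version is organized a bit more carefully (explicitly verifying $\Sep(K)\cap R=\emptyset$, tracking $K_a,K_b,K''$ in the tree, and handling the root-clique convention separately), whereas the paper's proof states the non-adjacency conclusion more tersely; the one step you leave slightly implicit---that $K_b$ is a strict ancestor of $K$---is fine, since $K$ and $K_b$ are both ancestors of $K''$ and $K_b\notin T^{(K)}$ forces the comparison.
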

We need the following definition in our algorithm:
\begin{definition}[Emission condition for a clique]
\label{def:emissionC}
Clique $K$ satisfies the emission condition if $\Em(K)=\Sep(K)$.
\end{definition}

\begin{remark}
\label{rmk:em}
Theorem \ref{thm:main1} implies that clique $K$ satisfies the emission condition if and only if all elements in $\Sep(K)$ satisfy the emission condition in $K$.
\end{remark}
In the recursive approach, once the algorithm finds all the directed edges in $G^{(r)}$, it removes all the oriented edges for the next stage and restarts with the undirected components, i.e., the edges that it removes are the directed edges. Therefore, we require an edge cut in which all the edges are directed. This is satisfied by cliques with emission condition:
\begin{theorem}
\label{thm:edgecut}
		If clique $K$ satisfies the emission condition, $[\Sep(K),\Res(T^{(K)})]$ is a directed edge cut.
\end{theorem}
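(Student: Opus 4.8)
The plan is to combine Lemma~\ref{lem:edgecut} with the characterization of directed edges from Theorem~\ref{thm:main1}. By Lemma~\ref{lem:edgecut}, we already know that $[\Sep(K),\Res(T^{(K)})]$ is an edge cut, i.e.\ removing these edges disconnects $\Sep(K)$ (together with everything above $K$ in the tree) from $\Res(T^{(K)})$. So the only thing left to establish is the \emph{directedness}: every edge in this cut must be an oriented edge of $G^{(r)}$ pointing from $\Sep(K)$ into $\Res(T^{(K)})$, under the hypothesis that $K$ satisfies the emission condition, i.e.\ $\Em(K)=\Sep(K)$ (Definition~\ref{def:emissionC}).

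First I would take an arbitrary edge in the cut, say between $u\in\Sep(K)$ and some vertex $w\in\Res(T^{(K)})$. The goal is to show this edge is oriented $u\rightarrow w$. By definition $w\in\Res(K')$ for some clique $K'$ in the subtree $T^{(K)}$ (possibly $K'=K$), so $K_w=K'$ and $K_w$ lies in $T^{(K)}$. By Theorem~\ref{thm:main1}, the edge is oriented $u\rightarrow w$ precisely when $u$ satisfies the emission condition in $K_w$, i.e.\ $u\in\Sep(K_w)$ and $\Em(K_{u})\not\subseteq\Sep(K_w)$ (or $u=r$). The key sub-step is therefore to show that $u$, which sits in $\Sep(K)$ and hence is ``emitted'' by $K$ (since $\Em(K)=\Sep(K)$), remains emitted all the way down the path from $K$ to $K_w$. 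This is where the clique-intersection / induced-subtree property (Proposition~\ref{prop:van}) does the work: since $u\in\Sep(K)=K\cap\Pa(K)$, the subtree $T_u$ of cliques containing $u$ includes $K$ and its parent, and I would argue that because $K$ emits $u$ downward, $u$ must appear in the separators of the cliques on the path toward $K_w$ and continue to be emitted, so that $u\in\Em(K_w)$, which by Corollary~\ref{cor:main} makes $u$ a parent of $w$, i.e.\ $u\rightarrow w$.

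I expect the main obstacle to be the propagation step: carefully verifying that the emission condition, once satisfied at $K$, forces $u\in\Em(K')$ for every clique $K'$ on the tree-path between $K$ and $K_w$ that contains $u$. The subtlety is that emission is defined locally at each clique via Definition~\ref{def:emissionV} (requiring $\Em(K_u)\not\subseteq\Sep(K')$), so I must rule out the possibility that $u$ ``stops emitting'' somewhere strictly between $K$ and $K_w$. The cleanest route is probably an induction down the path using the induced-subtree property: if $u\in\Em(K)=\Sep(K)$ and $K$ is an ancestor of $K'$ with $u\in\Sep(K')$, then $K_u$ is an ancestor of $K$ (by Proposition~\ref{prop:van}(ii), $K_u$ is the root of $T_u$), so $\Em(K_u)$ is determined higher up and the condition $\Em(K_u)\not\subseteq\Sep(K')$ transfers along the path. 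Once the propagation is in hand, combining it with Theorem~\ref{thm:main1} over all edges of the cut immediately gives that every edge is directed and all point from the separator side into the residual side, completing the proof that the cut is a directed edge cut.
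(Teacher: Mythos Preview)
Your proposal is correct and follows essentially the same route as the paper: invoke Lemma~\ref{lem:edgecut} for the cut, then use Theorem~\ref{thm:main1} to orient each edge by showing that $u$ satisfies the emission condition in $K_w$. The paper dispatches your ``propagation'' worry in a single step rather than by induction down the path: since $\Em(K_u)\subseteq K_u$ and both $\Pa(K)$ and $K$ lie on the clique-tree path from $K_u$ to $K_w$, the clique-intersection property gives $K_u\cap K_w\subseteq K\cap\Pa(K)=\Sep(K)$, so $\Em(K_u)\not\subseteq\Sep(K)$ immediately forces $\Em(K_u)\not\subseteq\Sep(K_w)$.
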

\begin{algorithm}[t]
\addtocounter{algorithm}{-1}
\renewcommand{\thealgorithm}{Function}
\floatname{algorithm}{}
\begin{algorithmic}[1]
\State {\bf Initiate:} $Size_T=1$
\State Orient from $\Sep(K_{root})$ to $\Res(K_{root})$ in $G$.
\State $\Ex=K_{root}$
\While{$\Ex\neq \emptyset$}
\For{$K\in \Ch(\Ex)$}
\State Form $\Em(K)$.
\If{$K$ satisfies emission condition}
\State $T=T\backslash T^{(K)}$
\State Remove $[\Sep(K), \Res(T^{(K)})]$ and components containing $\Res(T^{(K)})$ from $G$.
\If{$(T^{(K)},\Sep(K))\in$ Memory}
\State Load $[(T^{(K)}, \Sep(K)),Size_{T^{(K)}}]$
\State $Size_{T}=Size_{T}\times Size_{T^{(K)}}$
\Else
\State $Size_{T^{(K)}}=\textsc{RS}(T^{(K)},K,\Sep(K))$
\State Save $[(T^{(K)}, \Sep(K)),Size_{T^{(K)}}]$
\State $Size_{T}=Size_{T}\times Size_{T^{(K)}}$
\EndIf
\Else
\State Orient from $\Em(K)$ to $\Res(K)$ in $G$.
\EndIf
\EndFor
\State $\Ex=\Ch(\Ex)$
\EndWhile
\State {\bf Return:} $Size_T\times\prod_{G'\in\text{chain comp of $G$}}\textsc{Size}(G')$.
\caption{$\textsc{RS}(T,K_{root},\Sep(K_{root}))$}
\end{algorithmic}
\end{algorithm}

Therefore, by Theorem \ref{thm:edgecut}, if clique $K$ satisfies the emission condition, in the clique tree, we can learn the orientations in trees $T^{(r)}\setminus T^{(K)}$ and $T^{(K)}$ separately. This property helps us to perform  our counting process efficiently by utilizing the memory in the process. More specifically, if rooted clique trees $T^{(r_1)}$ and $T^{(r_2)}$ share a rooted subtree whose root clique satisfies the emission condition, it suffices to perform the counting in this subtree only once. Based on this observation, we propose the  counting approach whose pseudo-code is presented in Algorithm \ref{algorithm:size}.

The input to Algorithm \ref{algorithm:size} is an essential graph $G^*$, and it returns $\Size(G^*)$ by computing equation \eqref{eq:prod}, through calling function $\textsc{Size}(\cdot)$ for each chain component of $G^*$. In Function $\textsc{Size}(\cdot)$, first the clique tree corresponding to the input UCCG is constructed. If this tree has not yet appeared in the memory, for every vertex $v$ of the input UCCG, the  function forms $T^{(v)}$ and calculates the number of $v$-rooted DAGs in the MEC by calling the rooted-size function $\textsc{RS}(\cdot)$ (lines 10-13). Finally, it saves and returns the sum of the sizes.

Function $\textsc{RS}(\cdot)$ checks whether each clique $K$ of each level of the input rooted tree satisfies the emission condition (lines 7). If so, it removes the rooted subtree $T^{(K)}$ from the tree and the corresponding subgraph from $G$ (lines 8-10), and checks whether the size of $T^{(K)}$ with its current separator set is already in the memory. If it is, the function loads it as $\Size_{T^{(K)}}$ (lines 12); else, $\textsc{RS}(\cdot)$ calls itself on the rooted clique tree $T^{(K)}$ to obtain $Size_{T^{(K)}}$, and then saves $\Size_{T^{(K)}}$ in the memory (lines 15 and 16). If the clique $K$ does not satisfy the emission condition, it simply orients edges from $\Em(K)$ to $\Res(K)$ in $G$ (lines 19). Finally, in the resulting essential graph, it calls the function $\textsc{Size}(\cdot)$ for each chain component (lines 24). 

For bounded degree graphs, the proposed approach runs in polynomial time:

\begin{theorem}
\label{th:compSIZE}
	Let $p$ and $\Delta$ be the number of vertices and maximum degree of a graph $G$. The computational complexity of MEC size calculator on $G$ is in the order of $O(p^{\Delta+2})$.
\end{theorem}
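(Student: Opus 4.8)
The plan is to bound the cost of the two mutually recursive functions by accounting for the total work done across all subtrees that can ever be generated, and then multiplying by the per-clique cost of computing emission sets. First I would establish the key structural facts that control the recursion. By Proposition \ref{prop:van}(iii) and the remark following it, a clique tree has at most $p-1$ minimal vertex separators, and each separator set $\Sep(K)$ is a clique in $G$, so $|\Sep(K)|\le\Delta$. Consequently the number of distinct pairs $(T^{(K)},\Sep(K))$ that the memoization in $\textsc{RS}(\cdot)$ can ever encounter is polynomially bounded: the subtrees $T^{(K)}$ arising in the recursion are rooted at cliques whose separator is one of the $O(p)$ minimal vertex separators, and each such separator, being a clique of size at most $\Delta$, can serve as a root-separator in at most $O(p^{\Delta})$ ways when we also account for the choice of which of its $\le\Delta$ vertices (or subsets thereof) play the role of the root variable. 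This is where the factor $p^{\Delta}$ enters.

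Next I would analyze the per-call cost. Within a single invocation of $\textsc{RS}(\cdot)$ on a tree with at most $p$ cliques, the \textbf{While} loop visits each clique once, and for each clique $K$ it forms $\Em(K)$ via Definition \ref{def:emissionV}. Checking the emission condition for a vertex $v\in\Sep(K)$ requires comparing $\Em(K_v)$ against $\Sep(K)$, which costs $O(\Delta)$ per vertex and hence $O(\Delta^2)$ per clique; summing over the $\le p$ cliques gives $O(p\,\Delta^2)$ per call, plus the $O(p)$ bookkeeping for edge removals and memory lookups. Because memoization guarantees that the body of $\textsc{RS}(\cdot)$ is executed at most once for each distinct $(T^{(K)},\Sep(K))$ pair, and there are $O(p^{\Delta+1})$ such pairs (the extra factor of $p$ coming from the choice of root clique within each induced subtree $T_v$), the total work in all $\textsc{RS}(\cdot)$ bodies is $O(p^{\Delta+1})\cdot O(p\,\Delta^2)$. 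Treating $\Delta$ as fixed and absorbing the $\Delta^2$ into the constant, this is $O(p^{\Delta+2})$, matching the claimed bound.

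Finally I would fold in the outer $\textsc{Size}(\cdot)$ layer and the clique-tree construction. Constructing a clique tree satisfying the clique-intersection and induced-subtree properties is known to be doable in $O(p+|E(G)|)=O(p\,\Delta)$ time by the references \cite{blair1993introduction,vandenberghe2015chordal}, so it is dominated by the recursion cost; and the loop over $v\in V(G)$ in $\textsc{Size}(\cdot)$, together with the product over chain components via equation \eqref{eq:prod}, contributes only polynomial overhead already subsumed. The main obstacle I expect is the careful accounting that makes the memoization bound tight: I must argue that the chordal structure forces the recursion to factor through the $O(p)$ minimal vertex separators rather than through arbitrary vertex subsets, so that the branching does not blow up exponentially in $p$. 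Concretely, the crux is showing via Theorem \ref{thm:edgecut} and Lemma \ref{lem:edgecut} that whenever a clique satisfies the emission condition the recursion splits along a \emph{directed} edge cut, so that distinct recursive subproblems correspond to distinct memoized $(T^{(K)},\Sep(K))$ entries and the $p^{\Delta}$ ceiling on such entries is never exceeded. Once that counting is pinned down, the stated complexity follows by the multiplication above.
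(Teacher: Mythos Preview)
Your approach has a genuine gap: you are trying to obtain the bound through memoization accounting, but the paper's argument does \emph{not} use memoization at all. In fact, the paper explicitly analyzes the worst case in which no memory lookup ever succeeds, and the bound still holds. The mechanism that controls the recursion is entirely different from what you describe.

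The key structural fact you are missing is a \emph{degree-reduction} lemma: once a root $v$ is fixed and all directed edges of $G^{(v)}$ are removed, every vertex $w$ in every remaining chain component has strictly smaller degree than it had in $G$. The reason is that $w$ must have at least one incoming directed edge in $G^{(v)}$ (take the shortest path from $v$ to $w$; the last step is oriented toward $w$ by Lemma \ref{lem:bern}), and that edge is deleted. Hence the maximum degree drops by at least one at each level of the $\textsc{Size}\to\textsc{RS}\to\textsc{Size}$ recursion. This yields the recurrence $t(\Delta)\le p\,t(\Delta-1)+cp^{3}$ (at most $p$ chain components, and $O(p^{3})$ to form all emission sets in one pass over the clique tree), which solves to $t(\Delta)=O(p^{\Delta+1})$; multiplying by the at most $p$ chain components of the input essential graph gives $O(p^{\Delta+2})$.

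Your memoization count, by contrast, is not pinned down: the recursion does not stay inside one fixed clique tree, because each call to $\textsc{Size}$ on a new chain component builds a \emph{fresh} clique tree and spawns its own $\textsc{RS}$ calls. So the ``distinct $(T^{(K)},\Sep(K))$ pairs'' you are counting are not subtrees of a single tree, and the $O(p^{\Delta+1})$ bound you assert for them has no justification. (Also, a minor slip: a clique in a graph of maximum degree $\Delta$ has size at most $\Delta+1$, not $\Delta$.) The fix is to abandon the memoization accounting and instead prove and use the degree-reduction step.
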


\begin{remark}
From Definition \ref{def:emissionV}, it is clear that if $\Sep(K_v)\cap\Sep(K)=\emptyset$, then $v\in\Sep(K)$ satisfies the emission condition in clique $K$. We can use this property for locally orienting edges without finding emission set $\Em(K_v)$.	
\end{remark}

\subsection{Simulation Results}

\begin{table}[t]
\begin{center}
  \begin{tabular}{ |c | c c c c c c |}
    \hline
$r$ & $p$ & 20 & 30 & 40 & 50 & 60 \\ \hline
\multirow{3}{*}{$0.2$}& $T_1$ &  0.50 & 2.26 & 6.65 & 19.55  & 55.59  \\
 & $T_2$  & 0.27 & 2.61 & 20.68 & 219.98 & $>$3600\\ 
 & $T_2/T_1$ & 0.54 & 1.15 & 3.11 & 11.25 & $>$65    \\ \hline
 \multirow{3}{*}{$0.25$} & $T_1$ & 0.51 & 2.27 & 7.56 & 25.46 & 59.21 \\ 
 & $T_2$ &  0.40 & 8.77 & 101.84 & 1760.21 & $>$3600\\
  &  $T_2/T_1$ & 0.78 & 3.86 & 13.47 & 69.12 & $>$60 \\
    \hline
  \end{tabular}
  \end{center}
  \caption{Average run time (in seconds).}
\label{table:counting}
\end{table}

We generated $100$ random UCCGs of size $p=20,\cdots,60$ with $r\times {p \choose 2}$ as the number of edges based on the procedure proposed in \cite{he2015counting}, where parameter $r$ controls the graph density. We compared the proposed algorithm with the counting algorithm in \cite{he2015counting} in Table \ref{table:counting}. 
Note that, as we mentioned earlier, since the counting methods are utilized for the purpose of sampling and applying prior knowledge, the five formulas in \cite{he2015counting} are not implemented in either of the counting algorithms.
The parameters $T_1$ and $T_2$ denote the average run time (in seconds) of the proposed algorithm and the counting algorithm in \cite{he2015counting}, respectively. 
For dense graphs, our algorithm is at least 60 times faster.

\section{Uniform Sampling from a MEC}
\label{sec:samp}

In this section, we introduce a sampler for generating random DAGs from a MEC. The  sampler is based on the counting method presented in Section 3. The main idea is to choose a vertex as the root according to the portion of members of the MEC having that vertex as the root, i.e., in UCCG $G$, vertex $v$ should be picked as the root with probability $\Size(G^{(v)})/ \Size(G)$.

The pseudo-code of our uniform sampler is presented in Algorithm \ref{algorithm:unifsamp}, which uses functions \textsc{Size}$(\cdot)$ and \textsc{RS}$(\cdot)$ of Section \ref{sec:algorithm}. The input to the sampler is an essential graph $G^*$, with chain components $\mathcal{G}=\{G_1,\cdots G_c\}$. For each chain component $G\in \mathcal{G}$, we set $v\in V(G)$ as the root with probability $\textsc{RS}(T^{(v)}, K, \Sep(K))/\textsc{Size}(G)$, where $K\in T_v$, and then we orient the edges in $G^*$ as in Algorithm \ref{algorithm:size}. We remove $G$ and add the created chain components to $\mathcal{G}$, and repeat this procedure until all edges are oriented, i.e., $\mathcal{G}=\emptyset$.

\begin{example}
For the UCCG in Figure \ref{fig:ex1}$(a)$, as observed in Example \ref{ex:1}, $\Size(G^{(v_1)})=\Size(G^{(v_4)})=2$,  $\Size(G^{(v_2)})=\Size(G^{(v_3)})=3$, and $\Size(G)=10$. Therefore, we set vertices $v_1$, $v_2$, $v_3$, and $v_4$ as the root with probabilities $2/10$, $3/10$, $3/10$, and $2/10$, respectively. Suppose $v_2$ is chosen as the root. Then as seen in Example \ref{ex:1}, $\Size(G''^{(v_1)})=\Size(G''^{(v_3)})=\Size(G''^{(v_4)})=1$. Therefore, in $G''$, we set either of the vertices as the root with equal probability to obtain the final DAG.
\end{example}

\begin{theorem}
\label{thm:unif}
The sampler in Algorithm \ref{algorithm:unifsamp} is uniform. 
\end{theorem}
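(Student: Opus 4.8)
The goal is to prove that Algorithm~\ref{algorithm:unifsamp} produces a uniformly random DAG from the MEC, i.e.\ every DAG $D\in\MEC(G^*)$ is output with probability $1/\Size(G^*)$. The plan is to argue by induction on the number of unoriented edges remaining, showing that at each stage the partial orientation chosen so far has probability exactly inversely proportional to the number of DAGs consistent with it. Since the chain components are handled independently and the total size factorizes as $\Size(G^*)=\prod_i\Size(G_i)$ by equation~\eqref{eq:prod}, it suffices to prove uniformity for a single UCCG $G$; the product over components then follows because the root choices in distinct components are made independently and multiply correctly.

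First I would fix a single UCCG $G$ and a target DAG $D\in\MEC(G)$, and track the probability that the sampler outputs exactly $D$. The sampler picks a root $v\in V(G)$ with probability $\Size(G^{(v)})/\Size(G)$. By Lemma~\ref{lem:partition}, the $v$-rooted sub-classes partition $\MEC(G)$, so $D$ belongs to exactly one of them, namely the one whose root is the unique source $v^\ast$ of $D$. Conditioned on choosing $v^\ast$, the sampler orients the edges of $G^{(v^\ast)}$ deterministically (via the emission-condition machinery of Theorem~\ref{thm:main1}, exactly as in Algorithm~\ref{algorithm:size}), which is forced and hence correct, and then recurses on the resulting chain components. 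The key structural fact I would invoke is that by equation~\eqref{eq:size} the size of $G^{(v^\ast)}$ equals the product of the sizes of its chain components, and that $D$ restricted to these components is a valid member of each of their MECs.

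The heart of the argument is then a clean induction. I would set up the inductive hypothesis: for any UCCG $H$ with fewer edges than $G$ and any $D'\in\MEC(H)$, the sampler outputs $D'$ with probability $1/\Size(H)$. For the inductive step, the probability the sampler outputs $D$ is the probability it selects root $v^\ast$, times the product over the chain components $G_1',\dots,G_k'$ of $G^{(v^\ast)}$ of the probabilities that the recursion outputs $D$ restricted to each component. Using the inductive hypothesis on each component and equation~\eqref{eq:size}, this telescopes:
\begin{equation}
\label{eq:telescope}
\frac{\Size(G^{(v^\ast)})}{\Size(G)}\cdot\prod_{j=1}^{k}\frac{1}{\Size(G_j')}
=\frac{\Size(G^{(v^\ast)})}{\Size(G)}\cdot\frac{1}{\Size(G^{(v^\ast)})}
=\frac{1}{\Size(G)}.
\end{equation}
The base case is a fully oriented graph (a single DAG), output with probability $1$, which matches $1/\Size=1$.

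The main obstacle, and the step requiring the most care, is justifying that the recursion genuinely decomposes $D$ into independent, consistent pieces on the chain components of $G^{(v^\ast)}$: I must verify that the restriction of $D$ to each chain component $G_j'$ is itself a DAG with no v-structures lying in $\MEC(G_j')$, so that the inductive hypothesis applies, and that conversely the orientations chosen across different components are made independently by the algorithm so their probabilities multiply. This rests on the fact (from \cite{he2015counting}, quoted before equation~\eqref{eq:size}) that $G^{(v^\ast)}$ is a chain graph with chordal chain components together with the correctness of the deterministic orientation step established in Theorem~\ref{thm:main1}; once those are in hand, the bijection between DAGs in $\MEC(G)$ rooted at $v^\ast$ and tuples of DAGs over the components is exactly what makes the product in~\eqref{eq:telescope} collapse, and uniformity follows.
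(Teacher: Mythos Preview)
Your proposal is correct and follows essentially the same route as the paper: both argue by induction, choose the unique root $v^\ast$ of the target DAG with probability $\Size(G^{(v^\ast)})/\Size(G)$, apply the inductive hypothesis to the chain components of $G^{(v^\ast)}$, and observe that the product telescopes via equation~\eqref{eq:size} to $1/\Size(G)$. The only cosmetic differences are the induction parameter (you use the number of unoriented edges, the paper uses the maximum recursion depth) and that you reduce to a single UCCG first while the paper carries the product over chain components through the induction; neither affects the substance.
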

For bounded degree graphs, the proposed sampler is capable of producing uniform samples in polynomial time.
\begin{corollary}
\label{cor:sampcomp}
The computational complexity of the uniform sampler is in the order of $O(\Delta p^{\Delta+2})$.
\end{corollary}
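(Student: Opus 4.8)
The plan is to obtain the complexity of the uniform sampler by bootstrapping directly off the complexity of the size calculator established in Theorem \ref{th:compSIZE}. The key observation is that the sampler reuses the machinery of Algorithm \ref{algorithm:size}: to set a root in a chain component $G$, it must evaluate $\textsc{RS}(T^{(v)},K,\Sep(K))$ and $\textsc{Size}(G)$, and the dominant cost of one full sampling pass is essentially one invocation of the size computation. First I would argue that a single complete run of the sampler orients \emph{all} edges of $G^*$ and terminates, because at each iteration we orient a nonempty set of edges (the rooted essential graph $G^{(v)}$ contains at least the edges out of $v$) and recurse only on the strictly smaller undirected chain components that remain; this guarantees termination in at most $p$ rounds of root selection.

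Next I would bound the work done across all rounds. The natural decomposition is: the cost of computing the selection probabilities plus the cost of orienting. Computing the probability $\textsc{RS}(T^{(v)},K,\Sep(K))/\textsc{Size}(G)$ for the chosen chain component requires running the counting routine on that component, which by Theorem \ref{th:compSIZE} costs $O(p^{\Delta+2})$ in the worst case. The crucial point is that the chain components arising during the recursion are vertex-disjoint (they partition the remaining undirected vertices), and each time we fix a root we spend one size-style computation on the current component before recursing into its sub-components. The extra factor of $\Delta$ in the claimed bound $O(\Delta p^{\Delta+2})$ should come from the per-root overhead: selecting a root among the candidate vertices and orienting the emitted edges, where the degree bound $\Delta$ controls how many neighbors each oriented vertex can touch.

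The main obstacle I anticipate is making the accounting rigorous without incurring an extra factor of $p$ from the root-selection loop. A loose analysis would say we select $O(p)$ roots, each costing $O(p^{\Delta+2})$, giving $O(p^{\Delta+3})$ rather than the sharper $O(\Delta p^{\Delta+2})$. To get the tighter bound I would exploit that the memoization in Function $\textsc{RS}(\cdot)$ and Function $\textsc{Size}(\cdot)$ (the ``if $T\in$ Memory'' checks) means repeated sizes on identical rooted subtrees are computed only once, and that the subtrees whose root cliques satisfy the emission condition are processed disjointly. The right strategy is to charge the cost against the edge-cut structure guaranteed by Theorem \ref{thm:edgecut}: each directed edge cut $[\Sep(K),\Res(T^{(K)})]$ is handled a bounded number of times, and summing over the disjoint residual pieces telescopes to a single $O(p^{\Delta+2})$ term, with the $\Delta$ factor absorbing the local orientation work at each clique.

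Finally I would assemble these pieces: termination in polynomially many rounds, the per-component size computation bounded by Theorem \ref{th:compSIZE}, and the disjointness of recursive components ensuring the total does not multiply but rather sums to within a $\Delta$ factor of a single size calculation. This yields the stated $O(\Delta p^{\Delta+2})$ and completes the corollary. I would note that correctness (uniformity) is already supplied by Theorem \ref{thm:unif}, so here only the running-time bound needs to be verified.
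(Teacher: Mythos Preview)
Your proposal has a genuine gap in how you obtain the factor $\Delta$. You attribute it to ``per-root overhead'' in orienting edges and to memoization across disjoint edge cuts, but neither of these is what produces $\Delta$ in the paper's argument, and your mechanism does not yield a concrete bound. The paper's worst-case analysis of Theorem \ref{th:compSIZE} explicitly assumes the memory conditions are \emph{never} satisfied, so memoization cannot be the source of the saving; and the local orientation work at a clique is bounded by a polynomial in $p$ (at worst $O(p^3)$), not by anything involving $\Delta$.

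The correct accounting is a recurrence in the maximum degree, exactly parallel to the proof of Theorem \ref{th:compSIZE}. On a chain component with maximum degree $\Delta$, one call of \textsc{Rooted} (i) builds a clique tree in $O(p^2)$, (ii) computes the selection probabilities by running $\textsc{Size}(G)$, which by the analysis inside Theorem \ref{th:compSIZE} costs $O(p^{\Delta+1})$ on a single component, and (iii) orients $G^{(v)}$ in $O(p^3)$. The resulting chain components of $G^{(v)}$ each have maximum degree at most $\Delta-1$ (this is the key structural fact already established in the proof of Theorem \ref{th:compSIZE}: every vertex acquires at least one incoming directed edge), and there are at most $p$ of them. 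Writing $t(\Delta)$ for the cost on a component of maximum degree $\Delta$, this gives
\[
t(\Delta) \le p\, t(\Delta-1) + c\bigl(p^{\Delta+1} + p^3\bigr),
\]
which unrolls to $t(\Delta)=O(\Delta\, p^{\Delta+1})$: the $\Delta$ is the recursion depth, not a degree factor in the orientation cost. Multiplying by at most $p$ initial chain components yields $O(\Delta\, p^{\Delta+2})$. Your outline never invokes the degree-drop property and hence has no mechanism to bound the recursion depth by $\Delta$; without it, your own ``loose analysis'' giving $O(p^{\Delta+3})$ is in fact what your argument establishes.
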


\begin{algorithm}[t]
\begin{algorithmic}
\State {\bf Input:} Essential graph $G^*$, with chain components
\State \hspace{10mm} $\mathcal{G}=\{G_1,\cdots G_c\}$.
\While{$\mathcal{G}\neq \emptyset$}
\State Pick an element $G\in\mathcal{G}$, and update $\mathcal{G}=\mathcal{G}\setminus G$.
\State Run $\textsc{Rooted}(G)$.
\EndWhile
\State {\bf Return:} $G^*$\\ 
\hrulefill
\Function{Rooted}{$G$}
\State Construct a clique tree $T=(\mathcal{K}_G,E(T))$.
\State Set $v\in V(G)$ as the root with prob. $\frac{\textsc{RS}(T^{(v)}, K, \Sep(K))}{\textsc{Size}(G)}$.
\State For every clique $K$ in $T^{(v)}$, form $\Em(K)$.
\State Orient from $\Em(K)$ to $\Res(K)$ in $G^*$ and $G$.
\State $\mathcal{G}=\mathcal{G}\cup\{\text{chain components of }G\}$.
\EndFunction
\caption{Uniform Sampler}
\label{algorithm:unifsamp}
\end{algorithmic}
\end{algorithm}

\section{Counting and Sampling with Prior Knowledge}
\label{sec:prior}

Although in structure learning from observational data the orientation of some edges may remain unresolved, in many applications, an expert may have prior knowledge regarding the direction of some of the unresolved edges. In this section, we extend the counting and sampling methods to the case that such prior knowledge about the orientation of a subset of the edges is available. 
Specifically, we require that in the counting task, only DAGs which are consistent with the prior knowledge are counted, and in the sampling task, we force all the generated sample DAGs to be consistent with the prior knowledge. 
Note that the prior knowledge may not be necessarily realizable, that is, there may not exist a DAGs in the corresponding MEC with the required orientations. In this case, the counting should return zero, and the sampling should return an empty set.

\subsection{Counting with Prior Knowledge}

We present the available prior knowledge in the form of a hypothesis graph $H=(V(H),E(H))$.
Consider an essential graph $G^*$. For $G^*$, we call a hypothesis realizable if there is a member of the MEC with directed edges consistent with the hypothesis. In other words, a hypothesis is realizable if the rest of the edges in $G^*$ can be oriented without creating any v-structures or cycles. More formally:
\begin{definition}
For an essential graph $G^*$, a hypothesis graph $H=(V(H),E(H))$ is called realizable if there exists a DAG $D$ in $\MEC(G^*)$, for which $E(D)\subseteq E(H)$.
\end{definition}

\begin{example}
For essential graph $v_1-v_2-v_3-v_4$, the hypothesis graph $H: v_1\rightarrow v_2-v_3\leftarrow v_4$ is not realizable, as the edge $v_2-v_3$ cannot be oriented without forming a v-structure.
\end{example}

For essential graph $G^*$, let $\Size_H(G^*)$ denote the number of the elements of $\MEC(G^*)$, which are consistent with hypothesis $H$, i.e., $\Size_H(G^*)=|\{D:D\in \text{MEC}(G^*),E(D)\subseteq E(H)\}|$. Hypothesis $H$ is realizable if $\Size_H(G^*)\neq 0$.

As mentioned earlier, each chain component $G$ of a chain graph contains exactly one root variable. We utilize this property to check the realizability and calculate $\Size_H(G^*)$ for a hypothesis graph $H$.
Consider essential graph $G^*$ with chain components $\mathcal{G}=\{G_1,\cdots C_c\}$. Following the same line of reasoning as in equation \eqref{eq:prod}, we have
\begin{equation}
\label{eq:prodH}
\textit{Size}_H(G^*)=\prod_{i=1}^c\textit{Size}_H(G_i).
\end{equation}
Also, akin to equation \eqref{eq:sum}, for any $G\in\mathcal{G}$,
\begin{equation}
\label{eq:sumH}
\textit{Size}_H(G)=\sum_{v\in V(G)}\textit{Size}_H(G^{(v)}).
\end{equation}
Therefore, in order to extend the pseudo code to the case of prior knowledge, we modify functions $\textsc{Size}(\cdot)$ and $\textsc{RS}(\cdot)$ to get $H$ as an extra input.
In our proposed pseudo code, the orientation task is performed in lines 2 and 19 of Function $\textsc{RS}(\cdot)$. Let $S$ be the set of directed edges of form $(u,v)$, oriented in either line 2 or 19. In function $\textsc{RS}(\cdot)$, after each of lines 2 and 19, we check the following:

\begin{algorithmic}
\If{$S\not\subseteq E(H)$}
\State {\bf Return:} 0
\EndIf
\end{algorithmic}
This guarantees that, any DAG considered in the counting will be consistent with the hypothesis $H$.

\begin{figure}[t]
\begin{center}
\includegraphics[scale=0.6]{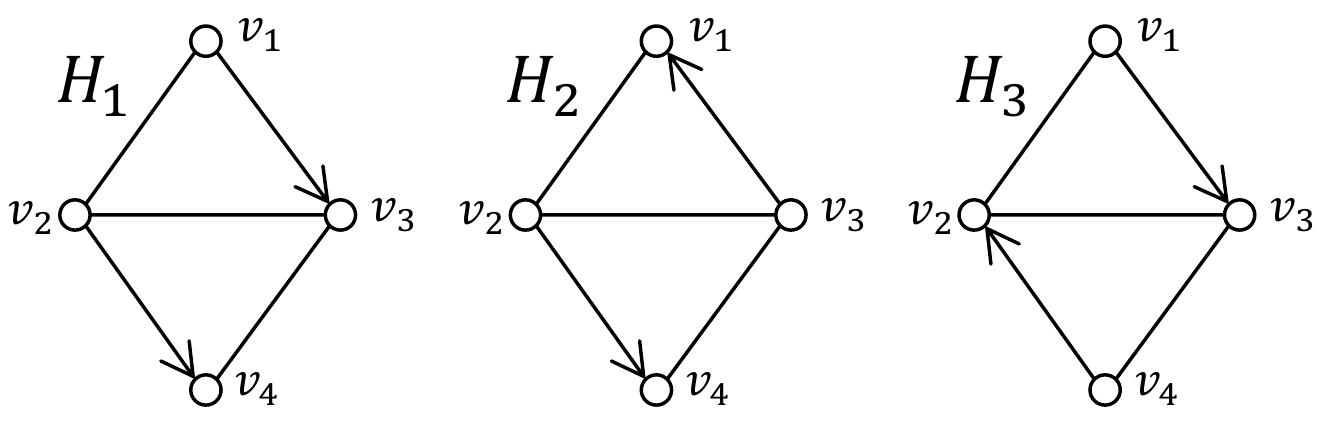}
\caption{Graphs related to Example \ref{ex:4}.}
\label{fig:ex4}
\end{center}
\end{figure}

\begin{example}
\label{ex:4}
Consider the three hypothesis graphs in Figure \ref{fig:ex4} for the essential graph in Figure \ref{fig:ex1}$(a)$. For hypothesis $H_1$, $\Size_{H_1}(G^{(v_1)})=2$, $\Size_{H_1}(G^{(v_2)})=1$, and $\Size_{H_1}(G^{(v_3)})=\Size_{H_1}(G^{(v_4)})=0$.
Therefore, we have three DAGs consistent with hypothesis $H_1$, i.e., $\Size_{H_1}(G)=3$. For hypothesis $H_2$, $\Size_{H_2}(G^{(v_2)})=\Size_{H_2}(G^{(v_3)})=2$, and $\Size_{H_2}(G^{(v_1)})=\Size_{H_2}(G^{(v_4)})=0$, 
Therefore, four DAGs are consistent with hypothesis $H_2$, i.e., $\Size_{H_2}(G)=4$. Hypothesis $H_3$ is not realizable.
\end{example}

One noteworthy application of checking the realizability of a hypothesis is in the context of estimating the causal effect of interventions from observational data \cite{maathuis2009estimating,nandy2017estimating}.
This could be used for instance, to predict the effect of gene knockouts on other genes or some phenotype of interest, based on observational gene expression profiles. The authors of \cite{maathuis2009estimating,nandy2017estimating} proposed a method called (joint-)IDA for estimating the average causal effect, which as a main step requires extracting possible valid parent sets of the intervention nodes from the essential graph, with the multiplicity information of the sets.
To this end, a semi-local method was proposed in \cite{nandy2017estimating}, which is exponential in the size of the chain component of the essential graph. This renders the approach infeasible for large components. Using our proposed method to address this problem, we can fix a configuration for the parents of the intervention target, and count the number of consistent DAGs.


\subsection{Sampling with Prior Knowledge}

Suppose an experimenter is interested in generating sample DAGs from a MEC. However, due to her prior knowledge, she requires the generated samples to be consistent with a given set of orientations for a subset of the edges.
In this subsection, we modify our uniform sampler to apply to this scenario. We define the problem statement formally as follows. 
Given an essential graph $G^*$ and a hypothesis graph $H$ for $G^*$, we are interested in generating samples from the $\MEC(G^*)$ such that each sample is consistent with hypothesis $H$. Additionally, we require the distribution of the samples to be uniform conditioned on being consistent. That is, for each sample DAG $D\in\MEC(G^*)$, 
\[
P(D)=
\begin{cases}
\displaystyle\frac{1}{\Size_H(G^*)},&\text{if }E(D)\subseteq E(H),\\
\displaystyle0,\hspace{1mm}&\text{otherwise}.	
\end{cases}
\]


Equations \eqref{eq:prodH} and \eqref{eq:sumH} imply that we can use a method similar to the case of the uniform sampler. That is, we choose a vertex as the root according to the ratio of the DAGs $D\in\MEC(G^*)$ which are consistent with $H$ and have the chosen vertex as the root, to the total number of consistent DAGs. More precisely, in UCCG $G$, vertex $v$ should be picked as the root with probability $\Size_H(G^{(v)})/\Size_H(G)$. 
In fact, 
the uniform sampler could be viewed as a special case of sampler with prior knowledge with $H=G^*$. Hence, the results related to the uniform sampler extend naturally.

\begin{example}
Consider hypothesis graph $H_1$ in Figure \ref{fig:ex4} for the essential graph $G$ in Figure \ref{fig:ex1}$(a)$. As observed in Example \ref{ex:4}, we have $\Size_{H_1}(G^{(v_1)})=2$, $\Size_{H_1}(G^{(v_2)})=1$, $\Size_{H_1}(G^{(v_3)})=\Size_{H_1}(G^{(v_4)})=0$, and $\Size_{H_1}(G)=3$. Therefore, we set vertices $v_1$, $v_2$, $v_3$, and $v_4$ as the root with probabilities $2/3$, $1/3$, $0$, and $0$, respectively. 
\end{example}

\section{Application to Intervention Design}
\label{sec:int}


In this section, we demonstrate that the proposed method for calculating the size of MEC with prior knowledge can be utilized to design an optimal intervention target in experimental causal structure learning. We will use the setup in \cite{ghassami2018budgeted} which is as follows:
Let $G^*$ be the given essential graph, 
obtained from an initial stage of observational structure learning, 
and let $k$ be our intervention budget, i.e., the number of interventions we are allowed to perform. Each intervention is on only a single variable and the interventions are designed passively, i.e., the result of one intervention is not used for the design of the subsequent interventions. Let $\mathcal{I}$ denote the \emph{intervention target set}, which is the set of vertices that we intend to intervene on. Note that since the experiments are designed passively, this is not an ordered set. 
Intervening on a vertex $v$ resolves the orientation of all edges intersecting with $v$ \cite{eberhardt2005number}, and then we can run Meek rules to learn the maximal PDAG \cite{perkovic2017interpreting}. Let $R(\mathcal{I},D)$ be the number of edges that their orientation is resolved had the ground truth underlying DAG been $D$, and let $\mathcal{R}(\mathcal{I})$ be the average of $R(\mathcal{I},D)$ over the elements of the MEC, that is
\begin{equation}
\label{eq:R}
	\mathcal{R}(\mathcal{I})=\frac{1}{\Size(G^*)}\sum_{D\in\MEC(G^*)}R(\mathcal{I},D).
\end{equation}
The problem of interest is finding the set $\mathcal{I}\subseteq V(G^*)$ with $|\mathcal{I}|=k$, that maximizes $\mathcal{R}(\cdot)$.

In \cite{ghassami2018budgeted}, it was proved that $\mathcal{R}(\cdot)$ is a sub-modular function and hence, a greedy algorithm recovers an approximation to the optimum solution. Still, calculating $\mathcal{R}(\mathcal{I})$ for a given $\mathcal{I}$ remains as a challenge.
For an intervention target candidate, in order to calculate $\mathcal{R}(\mathcal{I})$, conceptually, we can list all DAGs in the MEC and then calculate the average according to \eqref{eq:R}. However, for large graphs, listing all DAGs in the MEC is computationally intensive.
Note that the initial information provided by an intervention is the orientation of the edges intersecting with the intervention target. Hence, we propose to consider this information as the prior knowledge and apply the method in Section \ref{sec:prior}. 

\begin{figure}[t]
\begin{center}
\includegraphics[height=45mm,width=90mm]{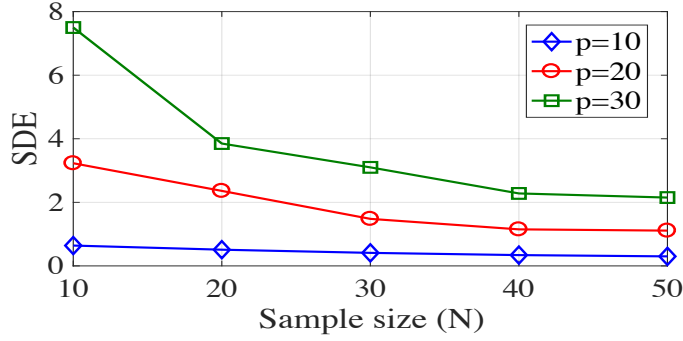}
\caption{SDE versus the sample size.}
\label{fig:sim}
\end{center}
\end{figure}

Let $\mathcal{H}$ be the set of hypothesis graphs, in which each element $H$ has a distinct configuration for the edges intersecting with the intervention target. If the maximum degree of the graph is $\Delta$, cardinality of $\mathcal{H}$ is at most $2^{k\Delta}$, and hence, it does not grow with $p$.
For a given hypothesis graph $H$, let $G^*_H=\{D: D\in MEC(G^*), E(D)\subseteq E(H)\}$ denote the set of members of the MEC, which are consistent with hypothesis $H$. Using the set $\mathcal{H}$, we can break \eqref{eq:R} into two sums as follows:
\begin{equation}
\label{eq:exact}	
\begin{aligned}	
\mathcal{R}(\mathcal{I})
	&=\frac{1}{\Size(G^*)}\sum_{D\in\MEC(G^*)}R(\mathcal{I},D)\\
	&=\frac{1}{\Size(G^*)}\sum_{H\in\mathcal{H}}\sum_{D\in G^*_H}R(\mathcal{I},D)\\
	&=\sum_{H\in\mathcal{H}}\frac{\Size_H(G^*)}{\Size(G^*)}R(\mathcal{I},D).
\end{aligned}
\end{equation}

Therefore, we only need to calculate at most $2^{k\Delta}$ values instead of considering all elements of the MEC, which reduces the complexity from super-exponential to constant in $p$.

\subsection{Simulation Results}
An alternative approach to calculating $\mathcal{R}(\mathcal{I})$ is to estimate its value by evaluating uniform samples. We generated 100 random UCCGs of size $p=10,20,30$ with $r\times{p \choose 2}$  edges, where $r=0.2$. In each graph, we selected two variables randomly to intervene on. We obtained the exact $\mathcal{R}(\mathcal{I})$ using equation \eqref{eq:exact}. Furthermore, for a given sample size $N$, we estimated $\mathcal{R}(\mathcal{I})$ from the aforementioned Monte-Carlo approach using our proposed uniform sampler and obtained empirical standard deviation of error (SDE) over all graphs with the same size, defined as $SD(|\mathcal{R}(\mathcal{I})-\hat{\mathcal{R}}(\mathcal{I})|)$. Figure \ref{fig:sim} depicts SDE versus the number of samples. As can be seen, SDE becomes fairly low for sample sizes greater than $40$.

\section{Conclusion}
We proposed a new technique for calculating the size of a MEC, which is based on the clique tree representation of chordal graphs. We demonstrated that this technique can be utilized for uniform sampling from a MEC, which provides a stochastic way to enumerate DAGs in the class, which can be used for estimating the optimum DAG, most suitable for a certain desired property.
We also extended our counting and sampling method to the case where prior knowledge about the structure is available, which can be utilized in applications such as causal intervention design and estimating the causal effect of joint interventions.

\begin{appendices}

\section{Proof of Lemma \ref{lem:main}}



Let $d_G(v,u)$ denote the distance between vertices $v$ and $u$ in $G$. The following result from \cite{bernstein2017sampling} is used in our proof. In the proof $r$ always denotes the root vertex.

\begin{lemma}
\label{lem:bern}
\cite{bernstein2017sampling}
In an acyclic and v-structure-free orientation of a UCCG $G$ the root variable $r$ determines the orientation of all edges $u-w$, for which $d_G(r,u)\neq d_G(r,w)$.
\end{lemma}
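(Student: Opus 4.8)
The plan is to prove the explicit forced orientation that underlies the statement. Write $d(x):=d_G(r,x)$ for the graph distance from the root, and recall that since $G$ is connected this is finite for every $x$, and that adjacent vertices have distances differing by at most one. Consequently the hypothesis $d_G(r,u)\neq d_G(r,w)$ on an edge $u-w$ forces $|d(u)-d(w)|=1$. I claim that in any acyclic, v-structure-free orientation having $r$ as root, such an edge is always oriented from its smaller-distance endpoint toward its larger-distance endpoint. Because this prescription is a function of $d$ alone, and $d$ depends only on the choice of $r$, this immediately yields the lemma: the root $r$ determines the orientation of every edge between vertices at distinct distances.

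I would argue by induction on $n\geq 1$, proving that every edge $u-w$ with $\{d(u),d(w)\}=\{n-1,n\}$ is oriented from the distance-$(n-1)$ endpoint to the distance-$n$ endpoint. For the base case $n=1$, such an edge joins $r$ (at distance $0$) to a neighbor (at distance $1$); since $r$ is a root it has in-degree zero, so the edge necessarily points away from $r$, exactly as claimed. For the inductive step with $n\geq 2$, take an edge $u-w$ with $d(u)=n-1$ and $d(w)=n$. As $d(u)=n-1\geq 1$, I can choose a neighbor $p$ of $u$ lying on a shortest path from $r$ to $u$, so that $d(p)=n-2$; by the induction hypothesis at level $n-1$ the edge $p-u$ is oriented $p\rightarrow u$. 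Suppose, toward a contradiction, that $u-w$ were oriented $w\rightarrow u$. Then $u$ would receive the configuration $p\rightarrow u\leftarrow w$. But $d(p)=n-2$ and $d(w)=n$ differ by two, whereas adjacent vertices differ by at most one, so $p$ and $w$ are non-adjacent; hence $p\rightarrow u\leftarrow w$ is an induced v-structure, contradicting the v-structure-free assumption. Therefore the edge must be oriented $u\rightarrow w$, completing the induction.

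The only real content is the step that recognizes the distance-two gap between $p$ and $w$ forbids their adjacency, which is precisely what converts the requirement of avoiding a v-structure at $u$ into a uniquely forced outward orientation; choosing the distance of the farther endpoint as the induction parameter is what guarantees the needed edge $p\rightarrow u$ is already oriented when it is invoked. I expect this to be the crux, and I note that the argument uses only the v-structure-free condition together with the triangle inequality for graph distance, plus the in-degree-zero defining property of the root in the base case; neither chordality nor acyclicity beyond the existence of the root is actually needed.
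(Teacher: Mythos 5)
Your proof is correct and complete: the base case correctly uses the in-degree-zero defining property of the root, the inductive step's choice of a neighbor $p$ on a shortest path guarantees $d(p)=n-2$ and that the edge $p-u$ is already forced, and the triangle-inequality observation that $p$ and $w$ cannot be adjacent (their distances to $r$ differ by two) is exactly what turns $p\rightarrow u\leftarrow w$ into an induced v-structure and forces $u\rightarrow w$. There is, however, nothing internal to compare it against: the paper never proves Lemma \ref{lem:bern} but imports it from \cite{bernstein2017sampling} and uses it as the launching point of its own proof of Lemma \ref{lem:main}, where the only consequence exploited is precisely the explicit form you establish --- after partitioning $V(G)$ into BFS levels from $r$, all mid-level edges are oriented away from the root. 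So your argument supplies the missing proof rather than deviating from one, and it is essentially the standard level-by-level induction that the cited source uses. Your closing remark is also accurate: acyclicity is never invoked beyond the existence of an in-degree-zero vertex, and chordality plays no role, so the statement holds for any connected graph; the only point you might add for completeness is that in an acyclic v-structure-free orientation of a UCCG the in-degree-zero vertex is in fact unique (this is part of He et al.'s partitioning result, Lemma \ref{lem:partition}), though your proof correctly needs only that $r$ is \emph{some} vertex of in-degree zero.
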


We partition vertices based on their distance from the root variable, and call each part a \emph{level}.  Note that clearly, there is no edge from level $l_i$ to $l_j$, for $j>i+1$, otherwise, the vertex in level $l_j$ should be moved to level $l_{i+1}$. Based on Lemma \ref{lem:bern}, the direction of edges in between the levels (mid-level edges) will be determined to be away from the root. After determining the mid-level edges, we should check for the direction of in-level edges as well. We will show that the statement of Lemma \ref{lem:main} holds for both mid-level and in-level edges.
\begin{itemize}
\item $u\rightarrow v\in G^{(r)}$ is a mid-level edge:

Proof by induction:

{\bf Induction base:} We need to show that for any vertex $w\in l_1$, $r\in\Sep(K_w)$. Let $K_{root}$ be the root clique. By definition, $r\in K_{root}$. For any vertex $w\in l_1$, by definition of $l_1$, $w$ is adjacent to the root, and hence, there exists a clique $K\in T_w$ such that $r\in K$. Therefore, by the clique-intersection property, $r$ is contained in every clique on the path connecting $K_{root}$ and $K$. Specifically, $r$ should be contained in $K_w$, as $K_w$ is the root of the subtree $T_w$, i.e., $r\in K_w$, otherwise, we will have a cycle in the tree. Noting that by our convention, $r$ only appears in separator sets, concludes that $r\in\Sep(K_w)$. Therefore, the base of the induction is clear.

{\bf Induction hypothesis:} As the induction hypothesis, we assume that for any variable $u'\in l_{i-1}$ and $v'\in l_i$, such that $u'\rightarrow v'\in G^{(r)}$, we have $u'\in\Sep(K_{v'})$.

{\bf Induction step:} Assume that $u\in l_{i}$ and $v\in l_{i+1}$, such that $u\rightarrow v\in G^{(r)}$. We need to show that $u\in\Sep(K_v)$.
\begin{claim}
\label{claim:1}
	$v\not\in K_u$.
\end{claim}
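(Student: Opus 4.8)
The plan is to prove Claim \ref{claim:1} by contradiction: I would assume $v\in K_u$ and show that this forces $v$ to sit at level at most $i$, contradicting the assumption $v\in l_{i+1}$. The whole argument hinges on first exhibiting a vertex of level $i-1$ inside the clique $K_u$, and then exploiting the fact that everything in a clique is pairwise adjacent to bound the distance of $v$ from $r$.

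First I would locate such a low-level vertex using the induction hypothesis. Since $u\in l_i$ with $i\ge 1$, the distance $d_G(r,u)=i$ guarantees that $u$ has a neighbor $u'\in l_{i-1}$ lying on a shortest path from $r$ to $u$. The edge $u'-u$ joins two consecutive levels, so by Lemma \ref{lem:bern} it is oriented away from the root, i.e. $u'\to u\in G^{(r)}$. Applying the induction hypothesis to this edge (with $u'\in l_{i-1}$ playing the role of the parent and $u\in l_i$ the child; for $i=1$ this is exactly the induction base, giving $r\in\Sep(K_u)$) yields $u'\in\Sep(K_u)$, and in particular $u'\in K_u$.

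Now suppose, for contradiction, that $v\in K_u$. Because $u'$ and $v$ lie at the distinct levels $l_{i-1}$ and $l_{i+1}$ they are different vertices, and since both belong to the clique $K_u$ they must be adjacent in $G$. Consequently $d_G(r,v)\le d_G(r,u')+1=(i-1)+1=i$, which places $v$ at level at most $i$ and contradicts $v\in l_{i+1}$. Hence $v\notin K_u$, as claimed.

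The only delicate step is the first one, namely producing a level-$(i-1)$ vertex that genuinely lies in $K_u$; this is precisely where the induction hypothesis is indispensable, since it upgrades the mere existence of a shortest-path neighbor $u'$ to the statement that $u'$ actually appears in $\Sep(K_u)\subseteq K_u$, rather than only in some other clique containing $u$. Once that vertex is in hand, the clique structure forces the distance bound and the contradiction is immediate, with no further appeal to the clique-tree machinery needed for this particular claim.
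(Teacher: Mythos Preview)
Your proof is correct and follows essentially the same approach as the paper: both arguments use the induction hypothesis to place a level-$(i-1)$ neighbor of $u$ inside $K_u$, and then observe that $v\in l_{i+1}$ cannot be adjacent to any level-$(i-1)$ vertex, so $v$ cannot lie in that clique. The paper phrases the final step via the induced subtrees ($T_w\cap T_v=\emptyset$) rather than the direct distance bound, but this is a cosmetic difference only.
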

\begin{proof}
	Variable $v$ is non-adjacent to any variable in $l_j$, $j\le i-1$. Therefore, for all $w\in l_j$, $T_w\cap T_v=\emptyset$. On the other hand, $u$ should have a parent in $l_{i-1}$. Therefore, there exists $w\in l_{i-1}$, such that $T_w\cap T_u\neq\emptyset$. By induction hypothesis, $w\in\Sep(K_u)$. Therefore, since $T_w\cap T_v=\emptyset$, we have $v\not\in K_u$.
\end{proof}

Since $u$ and $v$ are adjacent, there exists a clique $K\in T_v$, such that $u\in K$. 
\begin{claim}
\label{claim:2}
	If $\{u,v\}\subseteq K$ and $v\not\in K_u$, then $u\in K_v$.
\end{claim}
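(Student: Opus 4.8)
The plan is to convert the vertex-membership conclusion $u\in K_v$ into a statement about the clique tree, and then to pin down the relative position of $K_u$ and $K_v$ inside the rooted tree $T^{(r)}$. By the definition of the induced subtree, $u\in K_v$ holds if and only if $K_v$ is one of the cliques containing $u$, i.e. if and only if $K_v\in T_u$. So the entire task reduces to showing that $K_v$ lies in the induced subtree $T_u$.

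First I would record what the hypothesis $\{u,v\}\subseteq K$ gives on the tree side: since $K$ contains both $u$ and $v$, we have $K\in T_u$ and $K\in T_v$. By Proposition \ref{prop:van}(ii), $K_u$ is the root of $T_u$ and $K_v$ is the root of $T_v$, where ``root'' means the clique of the subtree closest to the root of $T^{(r)}$. Consequently every clique of $T_u$ is a descendant of $K_u$, and every clique of $T_v$ is a descendant of $K_v$; in particular both $K_u$ and $K_v$ are (weak) ancestors of $K$ in $T^{(r)}$. Because the ancestors of a fixed clique in a rooted tree form a single path up to the root, $K_u$ and $K_v$ are comparable in the ancestor order.

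The crux is to eliminate every arrangement except the desired one, and this is exactly where the hypothesis $v\not\in K_u$ (that is, $K_u\notin T_v$) from Claim \ref{claim:1} is used. The key tool is the subtree-path property: since $T_v$ is a connected subtree rooted at $K_v$ and contains $K$, it contains the whole root-ward path from $K$ up to $K_v$; similarly $T_u$ contains the path from $K$ up to $K_u$. If $K_v$ were equal to, or a strict ancestor of, $K_u$, then $K_u$ would lie on the path from $K$ to $K_v$ and hence $K_u\in T_v$, forcing $v\in K_u$ and contradicting Claim \ref{claim:1}. Therefore $K_u$ must be a strict ancestor of $K_v$, which places $K_v$ on the path from $K$ up to $K_u$; since that path lies entirely in $T_u$, we conclude $K_v\in T_u$, i.e. $u\in K_v$, as required.

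The main obstacle I anticipate is not any deep combinatorics but the careful bookkeeping of the ancestor order: one must use that the root of an induced subtree is genuinely the clique nearest $r$, so that ``$K\in T_u$'' cleanly becomes ``$K_u$ is a weak ancestor of $K$'', and then translate membership in $T_u$ versus $T_v$ into statements about which cliques sit on the single root-ward path through $K$. Once the subtree-path property is invoked correctly, the hypothesis $v\not\in K_u$ does all the work of discarding the bad case, and no further reasoning about the graph $G$ itself is needed.
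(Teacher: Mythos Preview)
Your argument is correct and follows essentially the same route as the paper: both $K_u$ and $K_v$ lie on the unique path from the root clique to $K$ (since each is the root of its induced subtree and $K\in T_u\cap T_v$), hence they are comparable, and the hypothesis $v\notin K_u$ rules out the case where $K_v$ is a weak ancestor of $K_u$. The paper compresses this into the remark that if $u\notin K_v$ the two root-to-$K$ paths ``should be distinct, which results in a cycle''; your version spells out the ancestor-ordering case analysis more explicitly, but the content is the same.
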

\begin{proof}
By the induced-subtree property, since $K_v$ is the root of subtree $T_v$, the path from $K_{root}$ to $K$, passes through $K_v$. Similarly, the path from $K_{root}$ to $K$, passes through $K_u$. If $u\not\in K_v$, then the two aforementioned paths should be distinct, which results in a cycle in the tree, which is a contradiction.
\end{proof}

By Claim \ref{claim:2}, $u\in K_v$. Among all cliques that contain $u$, by Proposition \ref{prop:van}, $u$ is only in the residual set of $K_u$, and by Claim \ref{claim:1}, $v\not\in K_u$; therefore $K_u\neq K_v$. Therefore, $u\in \Sep(K_v)$.

\item $u\rightarrow v\in G^{(r)}$ is an in-level edge: Using the proof of Theorem 6 in \cite{he2008active}, if $u\rightarrow v\in G^{(r)}$, it should have been directed according to one of two possible rules: There exists vertex $w$ such that $G$ induced on $\{w,u,v\}$ is either (1) $w\rightarrow u-v$, or (2) $u\rightarrow w\rightarrow v$ and $u- v$. We show that the later case is not possible. This is similar to a claim in the proof of Theorem 8 in \cite{he2015counting}.\\ Proof by contradiction: Suppose $u-v$ is the first edge in level $l_i$ oriented by rule (2), that is, the other previously oriented edges are oriented via rule (1). Therefore, for the direction of edge $u\rightarrow w$, there should be $w_0$ in $l_{i-1}$ or $l_{i}$, such that $w_0\rightarrow u\in G$, and $w_0$ not adjacent to $w$. $w_0$ should also be adjacent to $v$, otherwise, we would learn $u-v$ from rule (1), not rule (2). Then to avoid cycle $\{w_0,u,w,v,w_0\}$, $w_0-v$ should be oriented as $w_0\rightarrow v$. But this directed edge will make a v-structure with $w\rightarrow v$, which is a contradiction. Therefore, we only need rule (1) for orientations. 

In order to orient edges using rule (1), we can apply the mid-level orienting method recursively. That is, we can consider the subgraph induced on a level and any vertex from the previous level as the root, and orient the mid-level edges for the new root.
\begin{claim}
\label{claim:3}
	Using rule (1) recursively is equivalent to applying the mid-level orienting method recursively.
\end{claim}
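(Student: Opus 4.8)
The plan is to establish the claimed equivalence by a single induction on the depth of the recursion, exhibiting a tight correspondence between each firing of rule~(1) and one mid-level orientation step: the witness $w$ in an application of rule~(1) is exactly the vertex that will play the role of root when the mid-level method descends one level deeper. Concretely, I would fix a level $l_i$ and, more generally, a sub-level produced at some recursion depth together with the vertex $w_0$ chosen from the preceding sub-level as its root. I will then prove, by induction on the within-sub-level distance $d(w_0,\cdot)$, that an in-level edge $u-v$ is oriented $u\rightarrow v$ by the mid-level method at this depth if and only if rule~(1) directs it $u\rightarrow v$ using a witness already oriented at a strictly smaller distance.

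For the direction ``mid-level $\Rightarrow$ rule~(1)'', suppose the mid-level method orients $u\rightarrow v$ because $d(w_0,u)=t$ and $d(w_0,v)=t+1$. Choose a neighbor $w$ of $u$ on a shortest $w_0$--$u$ path, so $d(w_0,w)=t-1$ and, by the induction hypothesis (the base case $t=1$ being the seeding edge $w_0\rightarrow u$), the edge $w-u$ is already oriented $w\rightarrow u$. The triangle inequality gives $w\not\sim v$, since $w\sim v$ would force $d(w_0,v)\le t$; thus $\{w,u,v\}$ induces $w\rightarrow u-v$ and rule~(1) directs $u\rightarrow v$, as required. For the reverse direction, suppose rule~(1) fires via $w\rightarrow u$ with $w\not\sim v$. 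Were $v$ a parent of $u$ at the same distance from $w_0$ as $w$, the triple $v\rightarrow u\leftarrow w$ with $v\not\sim w$ would be a v-structure, contradicting v-structure-freeness of $G^{(r)}$; hence $d(w_0,v)\ge d(w_0,u)$. Combined with $w\not\sim v$, the vertex $v$ sits at distance at least $2$ from $w$, so taking $w$ as the root of the next refinement places $u$ at distance $1$ and $v$ at distance $2$, and the mid-level method again directs $u\rightarrow v$.

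The step I expect to be the main obstacle is aligning the recursion depths of the two procedures and showing that the outcome is independent of which parent $w_0$ of the current sub-level is selected as the root, since a connected piece of an induced subgraph may be reachable from several vertices of the preceding sub-level. To close this I would appeal to Lemma~\ref{lem:bern} together with acyclicity and v-structure-freeness: these constraints pin down a \emph{unique} consistent orientation of $G^{(r)}$, so every admissible choice of root must reproduce it, and the correspondence above certifies that the recursive mid-level method realizes exactly this orientation. The supporting fact that the shortest-path witnesses $w$ always exist, and that the triangle-inequality bookkeeping is valid, follows from chordality via the clique-tree structure of Propositions~\ref{prop:blair1}--\ref{prop:van}. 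Finally, because rule~(2) has already been excluded, rule~(1) and its recursive mid-level realization together account for every in-level edge, which yields the stated equivalence.
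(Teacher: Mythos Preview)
Your proposal ultimately lands on the same two observations the paper uses, but you wrap them in machinery that is both unnecessary and, in one place, not correct as written.

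The paper's argument is a two-liner. For ``rule~(1) $\Rightarrow$ mid-level'': given an induced $w\rightarrow u-v$, in the recursive mid-level method every vertex eventually serves as root; when $w$ does, $u$ is at distance~$1$ and $v$ at distance $\ge 2$ (since $w\not\sim v$), so $u\rightarrow v$ is a mid-level edge. For ``mid-level $\Rightarrow$ rule~(1)'': Lemma~\ref{lem:bern} itself is proved using nothing but rule~(1), so mid-level orients no edge that rule~(1) cannot.

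Your forward direction via the explicit shortest-path predecessor $w$ and the triangle inequality is a perfectly fine concretisation of the second point above (and more explicit than the paper, which simply cites Lemma~\ref{lem:bern}). No appeal to chordality or Propositions~\ref{prop:blair1}--\ref{prop:van} is needed here, though: predecessors on shortest paths and the triangle inequality exist in any connected graph.

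The reverse direction is where the gap sits. Your v-structure argument (``Were $v$ a parent of $u$ at the same distance from $w_0$ as $w$\ldots hence $d(w_0,v)\ge d(w_0,u)$'') does not go through: the witness $w$ of an arbitrary firing of rule~(1) need not live in the current sub-level at all, let alone at distance $d(w_0,u)-1$ from $w_0$, so the case analysis on $d(w_0,v)$ versus $d(w_0,w)$ is not well posed and the stated conclusion does not follow. You then recover by saying ``taking $w$ as the root of the next refinement places $u$ at distance~$1$ and $v$ at distance~$2$'', which \emph{is} the paper's argument---but for that you need that $w$ actually becomes a root at some stage of the recursion, a fact you never establish (the paper simply asserts ``every vertex becomes root once''). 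Once you supply that fact, the induction on depth, the bookkeeping with $w_0$, and the v-structure detour can all be dropped.
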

This claim is clear because if there exists induced subgraph $w\rightarrow u-v$ in $G$ on $\{w,u,v\}$, since in applying the mid-level orienting method recursively every vertex becomes root once, at the time that $w$ becomes root, the edge $u-v$ will fall in mid-levels. Therefore, it will be oriented away from $w$, i.e., it will be oriented as $u\rightarrow v$. Also, clearly we are not orienting any extra edges, as Lemma \ref{lem:bern}, is merely based on rule (1).


Therefore, by the previous part of the proof, the statement of Lemma \ref{lem:main} holds for this oriented edges as well. Applying this reasoning recursively concludes the desired result for all in-level edges.
\end{itemize}

\section{Proof of Corollary \ref{cor:dir}}
By Proposition \ref{prop:van}, for any vertex $w$, $K_w$ is the root of $T_w$. That is, among the cliques containing $w$, $K_w$ is located in the highest level (in terms of the distance from the root if the tree).

\emph{Proof by contradiction.} By Lemma \ref{lem:main}, $u\in\Sep(K_v)$ in $T^{(r)}$. That is, $K_v$ is in a strictly lower level than $K_u$ in the clique tree. Now if $v\in\Sep(K_u)$, $K_v$ should be in a strictly higher level than $K_u$ in the clique tree, which is a contradiction.

\section{Proof of Lemma \ref{lem:all}}
\emph{Proof by contradiction.} Suppose there exists $w\in \Res(K_v)$, such that $u\rightarrow w\not\in G^{(r)}$. As mentioned in Section 2, \cite{he2015counting} showed that $G^{(r)}$ is a chain graph with chordal chain components. Therefore, by the definition of chain graph, there should not be a partially directed cycle in this graph. Therefore, in order to prevent a partially directed cycle on $\{u,v,w,u\}$, we should have $w\rightarrow v\in G^{(r)}$. Therefore, by Lemma \ref{lem:main}, $w\in\Sep(K_v)$, and by Proposition \ref{prop:van}, $K_v$ is unique. Therefore, this is in contradiction to the assumption of the lemma.

\section{Proof of Corollary \ref{cor:main}}
By definition of the emissio set, $\Em(K_v)\subseteq\Pa(v)$. To prove the opposite direction, suppose there exists vertex $w$, such that $w\rightarrow v\in G^{(r)}$, but $w\not\in\Em(K_v)$. Then by Lemma \ref{lem:main},  $w\in\Sep(K_v)$, and by Proposition \ref{prop:van}, $K_v$ is unique. Now, since $w\rightarrow v\in G^{(r)}$ and $w\in\Sep(K_v)$, by Lemma \ref{lem:all}, $w\in\Em(K_v)$. This implies that $\Pa(v)\subseteq\Em(K_v)$. Therefore, $\Em(K_v)=\Pa(v)$.

\section{Proof of Theorem \ref{thm:main1}}

For the \textit{only if} side, we need to show that (1) $u\in\Sep(K_v)$, and (2) $\Em(K_u)\not\subseteq\Sep(K_v)$. (1) is obtained from Lemma \ref{lem:main}. We prove (2) by contradiction: Suppose $\Em(K_u)\subseteq\Sep(K_v)$. Then by Corollary \ref{cor:main}, $\Pa(u)\subseteq\Sep(K_v)$. Since $u\rightarrow v\in G^{(r)}$, as seen in the proof of Lemma \ref{lem:main}, this edge should have been directed from the induced subgraph $w\rightarrow u-v$, for some vertex $w$. But since $\Pa(u)\subseteq\Sep(K_v)$, every such vertex is in $K_v$, and hence, is adjacent to $v$. Therefore, the induced subgraph $w\rightarrow u-v$ cannot exist.

For the \textit{if} side, we first note that since $u,v\in K_v$, they are adjacent. By the assumption, 
\begin{align*}
&\Em(K_u)\not\subseteq\Sep(K_v)\\
\Rightarrow &\Pa(u)\not\subseteq\Pa(v)\\
\Rightarrow &\exists w \text{ such that}
\begin{cases}
w\rightarrow u\in G^{(r)},\\
w\rightarrow v\not\in G^{(r)}.
\end{cases}
\end{align*}
If $v\rightarrow w\in G^{(r)}$, or $v-w\in G^{(r)}$, in order to avoid a partially directed cycle on $\{w,u,v,w\}$, we should have $v\rightarrow u\in G^{(r)}$. Therefore, by Corollary \ref{cor:dir}, $u\not\in\Sep(K_v)$, and by Proposition \ref{prop:van}, $K_v$ is unique. Therefore, this is in contradiction to the assumption.\\
Therefore, $w$ and $v$ are not adjacent, and hence, we have the induced subgraph $w\rightarrow u-v$. This implies that in order to avoid v-structure, $u-v$ should be oriented as $u\rightarrow v$, which is the desired result.

\section{Proof of Lemma \ref{lem:edgecut}}

By Proposition \ref{prop:van}, we can partition the vertices into three sets: $\Res(T^{(K)})$, $\Sep(K)$, and $V(G)\setminus(\Sep(K)\cup\Res(T^{(K)}))$. 
By the clique-intersection property, any vertex in $T^{(r)}\setminus T^{(K)}$ which is not in $K$, will not be contained in $T^{(K)}$, and hence, will not be adjacent with any vertices in $\Res(T^{(K)})$. 
We note that vertices in $T^{(r)}\setminus T^{(K)}$ can only appear in the separator set of $K$.
Therefore, the set of vertices in $T^{(r)}\setminus T^{(K)}$ which are not in $K$ is $V(G)\setminus(\Sep(K)\cup\Res(T^{(K)}))$.

Therefore, vertices in $\Res(T^{(K)})$ are not adjacent with any vertex in $V(G)\setminus(\Sep(K)\cup\Res(T^{(K)}))$. 
Let $S:=\Res(T^{(K)})$, and $\bar{S}=V(G)\setminus S$.
Therefore, $[\Sep(K),\Res(T^{(K)})]=[S,\bar{S}]$, which implies that $[\Sep(K),\Res(T^{(K)})]$ is an edge cut.


\section{Proof of Theorem \ref{thm:edgecut}}

Due to Lemma \ref{lem:edgecut}, we only need to show that all the edges in $[\Sep(K),\Res(T^{(K)})]$ are directed.

By the definition of emission condition for cliques, $[\Sep(K),\Res(K)]$ is directed. For any variable $u\in\Sep(K)$, and $v\in\Res(T^{(K)})\setminus\Res(K)$, by Remark \ref{rmk:em}, $u$ satisfies the emission condition in $K$. Hence, if $u\in K_v$ by the clique-intersection property, and using Proposition \ref{prop:van}, $u$ satisfies the emission condition in clique $K_v$ as well. Therefore, by Theorem \ref{thm:main1}, $u\rightarrow v\in G$.

\section{Proof of Theorem \ref{th:compSIZE}}

In $\textsc{Size}$ function, for a given graph $G$, we first construct a clique tree $T$ in $O(p^2)$ by a modified version of maximum cardinality search \cite{blair1993introduction}. We assume that in the worst-case scenario, the memory condition in line $7$ of Algorithm is not satisfied in any recursive call. Thus, we set each vertex $v\in V(G)$ as the root and call $\textsc{RS}$ function to compute the number of DAGs in $G^{(v)}$. Moreover, in $\textsc{RS}$ function, in the worst-case scenario, we assume that the emission condition is not satisfied in any recursive call. Thus, the while loop in lines 4-24 orients all directed edges in $G^{(v)}$ by forming emission sets for each clique $K$ in the clique tree. In particular, in order to from $\Em(K)$, for any vertex $v$ in $\Sep(K)$, we check whether vertex $v$ satisfies emission condition which we can do it in linear time. Since the size of any clique $K$ is at most $p$, the computational complexity of obtaining $\Em(K)$ would be in the order of $O(p^2)$. Moreover, each clique $K$ is considered only once in the while loop. Thus, the directed edges in $G^{(v)}$ are recovered in $O(p^3)$ since we have at most $p$ cliques in the clique tree.

 Now, we show that the degree of each vertex $w$ in any chain component of $G^{(v)}$ decreases at least by one after removing directed edges. To do so, we prove that there exists a directed edge in $G^{(v)}$ that goes to vertex $w$. By contradiction, suppose that there is no such directed edge. Furthermore, assume that vertex $w$ is in a chain component $G'$. Consider the shortest path from $v$ to $w$ in $G^{(v)}$. This path must pass through one of neighbors of $w$ in $G'$ such as $u$. Since the distance from $v$ to $u$ is less than $v$ to $w$, $u-w$ should be oriented as $u\rightarrow w$ (see Lemma \ref{lem:bern}). But it is in contradiction with the fact that $u$ and $w$ are in the same chain component. Therefore, the degree of each vertex $w$ in any chain component of $G^{(v)}$ decreases at least by one after removing directed edges in $G^{(v)}$.

Let $t(\Delta)$ be the computational complexity of running $\textsc{Size}$ function on a graph with maximum degree $\Delta$. Based on what we proved above, we have
\begin{equation*}
t(\Delta)\leq p t(\Delta-1)+cp^3,
\end{equation*}  
where $c$ is a constant. The above inequality holds true since we have at most $p$ chain component in $G^{(v)}$ where the maximum degree in each of them is at most $\Delta-1$. From this inequality, it can be easily shown that $t(\Delta)$ is in the order of $O(p^{\Delta+1})$. Since we may have at most $p$ chain components in essential graph $G^*$, the computational complexity of MEC size calculator is in the order of $O(p^{\Delta+2})$.



\section{Proof of Theorem \ref{thm:unif}}

The objective is to show that for the input essential graph $G^*$, any DAG $D$ in the MEC represented by $G^*$ is generated with probability $1/Size(G^*)$.

\emph{Proof by induction:} The function $\textsc{Size}(\cdot)$ finds the size of a component recursively, i.e., after setting a vertex $v$ as the root, and finding the orientations in $G^{(v)}$, it calls itself to obtain the size of the chain components of $G^{(v)}$. We induct on the maximum number of recursive calls required for complete orienting.\\
{\bf Induction base:}
For the base of the induction, we consider an essential graph with no required recursive call: Consider essential graph $G^*$ with chain component set $\mathcal{G}$, for which, for all $G\in\mathcal{G}$, for all $v\in V(G)$, $Size(G^{(v)})=1$ (as an example, consider the case that $G$ is a tree). Consider $D$ in the MEC represented by $G^*$, and assume vertex $v_G$ is required to be set as the root in chain component $G\in\mathcal{G}$ for $D$ to be obtained. We have
\begin{align*}
P(D) &= \prod_{G\in\mathcal{G}}P(v_G \text{ picked})
=\prod_{G\in\mathcal{G}}\frac{Size(G^{(v)})}{Size(G)}\\
&=\prod_{G\in\mathcal{G}}\frac{1}{Size(G)}
=\frac{1}{\prod_{G\in\mathcal{G}}Size(G)}\\
&=\frac{1}{Size(G^*)},
\end{align*}
where, the last equality follows from equation \eqref{eq:prod}.\\
{\bf Induction hypothesis:} 
For an essential graph $G^*$ with maximum required recursions of $l-1$, any DAG $D$ in the MEC represented by $G^*$ is generated with probability $1/Size(G^*)$.\\
{\bf Induction step:} We need to show that for an essential graph $G^*$ with maximum required recursions of $l$, any DAG $D$ in the MEC represented by $G^*$ is generated with probability $1/Size(G^*)$.
Assume vertex $v_G$ is required to be set as the root in chain component $G\in\mathcal{G}$, and $V_{G^{(v)}}$ is the set of vertices required to be set as root in the next recursions in obtained chain components in $G^{(v)}$ for $D$ to be obtained. We have
\begin{align*}
P(D) &= \prod_{G\in\mathcal{G}}P(v_G \text{ picked})P(V_{G^{(v)}} \text{ picked})\\
&= \prod_{G\in\mathcal{G}}\frac{Size(G^{(v)})}{Size(G)}P(V_{G^{(v)}} \text{ picked}).
\end{align*}
By the induction hypothesis, $$P(V_{G^{(v)}} \text{ picked})=1/Size(G^{v}).$$ Therefore,
\begin{align*}
P(D) &= \prod_{G\in\mathcal{G}}\frac{Size(G^{(v)})}{Size(G)}\frac{1}{Size(G^{(v)})}\\
&=\frac{1}{\prod_{G\in\mathcal{G}}Size(G)}\\
&=\frac{1}{Size(G^*)},
\end{align*}
where, the last equality follows from equation \eqref{eq:prod}.

\section{Proof of Corollary \ref{cor:sampcomp}}
 In $\textsc{ROOTED}$ function, for any chain component $G$ in $\mathcal{G}$, we construct a clique tree in $O(p^2)$ \cite{blair1993introduction} and obtain probabilities of the from $\frac{\textsc{RS}(T^{(v)}, K, \Sep(K))}{\textsc{Size}(G)}$ by running $\textsc{Size}(G)$, which its computational complexity is $O(p^{\Delta+1})$ (see Theorem \ref{th:compSIZE}). After selecting one of the vertices in $G$ as the root, say $v$, we recover all directed edges in $G^{(v)}$ in $O(p^3)$ and obtain chain components of $G^{(v)}$. Similar to the proof of Theorem \ref{th:compSIZE}, let $t(\Delta)$ be the running time of the algorithm on a chain component in $\mathcal{G}$ with maximum degree of $\Delta$. Then, based on what we argued above, we have
\begin{equation*}
t(\Delta)\leq pt(\Delta-1)+c(p^{\Delta+1}+p^3),
\end{equation*}
where $c$ is a constant. It can be shown that $t(\Delta)$ is in the order of $O(\Delta p^{\Delta+1})$. Since we may have at most $p$ chain components in $\mathcal{G}$, the computational complexity of uniform sampler would be in the order of $O(\Delta p^{\Delta+2})$.

\end{appendices}


\newpage
\bibliographystyle{alpha}
\bibliography{Refs.bib}


\end{document}